\documentclass{lmcs} %%% last changed 2014-08-20

%% optional lists of keywords
\keywords{optional comma separated list of keywords}

%% read in additional TeX-packages or personal macros here:
%% e.g. \usepackage{tikz}
\usepackage{hyperref}
%%\input{myMacros.tex}
%% define non-standard environments BEYOND the ones already supplied
%% here, for example
\theoremstyle{plain} %\crefname{satz}{Satz}{S\"atze}
%% Do NOT replace the proclamation environments lready provided by
%% your own.

\usepackage{lscape}
\usepackage{amsmath}
\usepackage{amsthm}
\usepackage{amssymb}
\usepackage{latexsym}
\usepackage{xcolor}
\usepackage{proof}
\usepackage{bussproofs}
\usepackage{amssymb}
\usepackage{amsmath}
\usepackage{mathrsfs}
\usepackage{color}
\usepackage{cmll}
\usepackage{bbold}
\usepackage{tikz}
\usepackage{tikz-qtree}

\newtheorem{theorem}{Theorem}[section]
\newtheorem{lemma}[theorem]{Lemma}
\newtheorem*{theorem*}{Theorem}
\newtheorem{proposition}[theorem]{Proposition}

\theoremstyle{definition}
\newtheorem{definition}{Definition}[section]
\newtheorem{example}{Example}[section]
\theoremstyle{remark}

\newcommand{\impl}{\rightarrow}

\newcommand{\cho}{\bigoplus}

\newcommand{\bool}{\mathtt{Bool}}

\newcommand{\encodes}{\rightsquigarrow}

\newcommand{\thr}{\varepsilon}
\newcommand{\calpe}{\mathcal{P}\thr}

\newcommand{\lam}{\lambda}
\newcommand{\lan}{\langle}
\newcommand{\ran}{\rangle}

\newcommand{\test}{\circlearrowright}
\newcommand{\trust}{\mathtt{Trust}?}
\newcommand{\tert}{\mathtt{True}}
\newcommand{\terf}{\mathtt{False}}

\newcommand{\hcoin}{\mathtt{h}}
\newcommand{\tcoin}{\mathtt{t}}

\newcommand{\lamt}{$\lambda^{\test}_{\trust}$}

\newcommand{\calp}{\mathcal{P}}

\newcommand{\cals}{\mathcal{S}}
\newcommand{\calt}{\mathcal{T}}

\newcommand{\probab}{\mathrm{Pr}}
\newcommand{\conf}{\mathrm{Confidence}}

\newcommand{\dist}{\mathrm{D}_{\mathrm{TV}}}

\newcommand{\FV}{\mathrm{FV}}

\newcommand{\red}{\mapsto}

%% due to the dependence on amsart.cls, \begin{document} has to occur
%% BEFORE the title and author information:

\begin{document}

% If the title is longer than 55 characters, then specify a shorter running title as the optional argument to \title. The running title should be roughyl at most 55 characters:

\title[A Typed $\lambda$-Calculus for Establishing Trust in Probabilistic Programs]{A Typed $\lambda$-Calculus for Establishing Trust in Probabilistic Programs}

%\titlecomment{{\lsuper*}OPTIONAL comment concerning the title, \eg,
%  if a variant or an extended abstract of the paper has appeared elsewhere.}
\thanks{This work was supported by the Ministero dell'Università e della Ricerca through the PRIN project BRIO (Bias, Risk and Opacity in AI), grant number 2020SSKZ7R.}	%optional

% affiliations are numbered automatically with a, b, c (see below)
% use the optional argument to indicate the affiliation(s) of each author
% omit the argument if there is only one author, or only one affiliation
\author[F.A.~Genco]{Francesco~A. Genco}[a]
\author[G.~Primiero]{Giuseppe Primiero}[b]
%\author[J.~Name3]{Josiah S.~Carberry\lmcsorcid{0000-0002-1825-0097}}[a]

% affiliation 1 (automatically numbered a)
\address{Center for Logic, Language and Cognition (LLC)\\
Department of Philosophy and Education Sciences,
University of Turin}	%optional
\email{francesco.genco@unito.it}  %optional

% affiliation 2 (automatically numbered b)
\address{Logic, Uncertainty, Computation and Information Lab (LUCI)\\
Department of Philosophy, University of Milan}	%optional
% write emails for all authors having that affiliation
\email{giuseppe.primiero@unimi.it}  %optional

%% etc.

%% required for running head on odd and even pages, use suitable
%% abbreviations in case of long titles and many authors:

%%%%%%%%%%%%%%%%%%%%%%%%%%%%%%%%%%%%%%%%%%%%%%%%%%%%%%%%%%%%%%%%%%%%%%%%%%%

%% the abstract has to PRECEDE the command \maketitle:
%% be sure not to issue the \maketitle command twice!

\begin{abstract}
  \noindent The extensive deployment of probabilistic algorithms has radically changed our perspective on several well-established computational notions. {\it Correctness} is probably the most basic one. While a typical probabilistic program cannot be said to compute {\it the} correct result, we often have quite strong expectations about the frequency with which it should return certain outputs. In these cases, {\it trust} as a generalisation of correctness fares better. One way to understand it is to say that a probabilistic computational process is trustworthy if the frequency of its outputs is compliant with a probability distribution which models its expected behaviour. We present a formal computational framework that formalises this idea. In order to do so, we define the typed $\lam$-calculus \lamt{} which features operators for conducting experiments at runtime on probabilistic programs and for evaluating whether they compute outputs as determined by a target probability distribution. After proving some fundamental  computational properties of the calculus, such as progress and termination, we define a notion of {\it confidence} that allows us to prove that our notion of {\it trust} behaves correctly with respect to the basic tenets of probability theory.
\end{abstract}

\maketitle

\section{Introduction}

The extensive deployment of software systems based on non-deterministic algorithms to solve tasks which would be of difficult, if not impossible, solution by symbolic methods only, has radically changed our perspective on several well-established computational notions. Even some very elementary notions require revision if they are still to be fruitfully employed as we have done so far. 

Among them, the very basic notion of {\it correctness} is an extreme example of such phenomenon. On the one hand, obviously a typical probabilistic program cannot be said to compute {\it the} correct result, nor that it computes the expected result given any particular execution, since such a program is typically non-deterministic and thus expected to yield possibly different results given the same input. On the other hand, probabilistic programs employed in practice often induce quite strong expectations about the correct frequency with which certain outputs should be returned, and it would be rather worrying if no correlation could be found between the frequencies of different outputs and the variety of inputs given to the program. For instance: given a run of an image classifier on what is considered a decent amount of random pictures, a classification as a golden baroque staircase of most inputs would make one think the software is bogus. 

A generalisation of correctness that can be also applied to these cases seems to emerge rather naturally if one considers several examples of this kind. In the literature on philosophy of computation, a term strictly related to this intuitive notion is being used more and more, namely  {\it trust}. As we can trust someone to act according to our expectations even if we do not have a precise knowledge of what that person will exactly do, we can trust a probabilistic program even though we do not have precise knowledge of how the execution of that program will turn out -- see \cite{sep-trust,sim20} for a survey on the philosophical notion of trust and \cite{md05,cof07,kra15,key16,ess20,gmw20,sul20,cof21} for surveys and references concerning the usage of the notion of trust in computer science. While with respect to human behaviour this state of ignorance is determined by what we could naively call the complexity of life, in the context of probabilistic computation the lack of knowledge might be simply motivated by the non-determinism of the program, but it might also involve a certain amount of opacity concerning the mechanisms governing its behaviour \cite{esc21}. Since the definition of trust in general and in particular with respect to non-sentient agents is still  very problematic and far from being definitely settled, we restrict our attention to the notion of trustworthiness, that is, the property that makes an agent worth (or not) of our trust \cite{pk16}. In other words, we aim at capturing those objective features of the behaviour of a computation that constitute a reasonable ground for concluding that we trust it to accomplish a given task. For the sake of terminological simplicity, though, we will simply conflate the notions of  {\it trust} and {\it trustworthiness} in the rest of this paper.

The core intuition motivating our formal treatment is expressed by the following question: how can a program be considered trustworthy if one ignores how its executions will unfold? Maybe similarly to the case of trust among humans, one very simple answer naturally occurs: by experiencing that the behaviour of the program meets our expectations for a sufficient number of times. A probabilistic computational process is trustworthy if the frequency of its outputs is compliant with a probability distribution which models its expected behavior. The aim of the present work is then to present a formal computational framework in which it is possible to directly formalise a notion of trustworthiness for probabilistic computational processes according to this simple idea. 

We will do this by defining the calculus \lamt{},  a typed probabilistic $\lam$-calculus featuring operators for conducting experiments on probabilistic programs and for evaluating whether they compute outputs as determined by a target probability distribution, where the latter models the task one expect them to accomplish. Since this kind of evaluation does not rely in any way on the definition of the probabilistic program under consideration, rather it exclusively exploits its outputs at runtime, the operators of the calculus are well suited also for analysing the behaviour of opaque programs. For the sake of simplicity, we leave aside for the moment the problem of formally specifying how to represent blackbox programs in the calculus, but we stress that there seems to exist no relevant obstacle to this end. 

The probabilistic $\lam$-calculus used as a base calculus is a rather standard one featuring a choice operator $\{\; \}$. By using this operator, we can construct terms of the form $\{p_1t_1 , \ldots , p_nt_n \}$  that non-deterministically reduce to one of the terms $t_1, \ldots , t_n$. The particular result of the reduction is determined according to the probabilities $p_1, \ldots , p_n$, namely, $\{p_1t_1 , \ldots , p_nt_n \}$ will reduce to $t_i$ with probability  $p_i$.
We further define in \lamt{} an operator $\test_n$ to conduct an experiment comprising $n$ executions of a probabilistic program $s$ and to collect all their outputs in a tuple $\lan s_1 , \ldots , s_n\ran$. Such a tuple can be seen as an encoding of the frequency distribution of the types of possible outputs of $s$ relative to the specific executions of our original term $s$. We then define a $\trust$ operator to check whether the obtained frequency distribution is sufficiently close to a fixed desired probability distribution $\calp$. Technically, a term $\trust (\test _n   s) \calpe  :  \bool \, \calpe  $ reduces to $\trust \lan s_1 , \ldots , s_n\ran\calpe :  \bool \,\calpe$ and this latter term will reduce to the result of the check. In particular, if the distance between the obtained frequency distribution $\lan s_1 , \ldots , s_n\ran$ and the desired probability distribution $\calp$ is not larger that the threshold $\thr$, then $\trust \lan s_1 , \ldots , s_n\ran\calpe  :  \bool \,\calpe$  reduces to  the value $\tert$, otherwise to the value $\terf$.

The resulting notion of trust, implemented  by constructs of the form $\trust (\test _n s)\calpe  : \bool \,\calpe$, is parametric in two senses. First, these constructs do not simply encode a notion of ``trust in the program $s$'' but a notion of ``trust that the program $s$ computes its outputs as determined by the probability distribution $\calp$''. Second, by varying the number $n$ of tests conducted for any given experiment, also the degree of confidence with which one trusts the program $s$ varies. If $n$ is too small, any result will be considered untrustworthy, even if the program $s$ is defined to compute according to $\calp$: after a single throw of a coin, one would hardly trust the coin to be fair, even if the coin were indeed fair. If $n$ grows but not quite enough, results may lead  to trust the program $s$, but such trust would be weak inasmuch as it is based on samples of little significance: after few throws of the coin, a normal distribution of heads and tails may suggest that the coin is fair, but one should remain alert that it might not. The more $n$ grows, the more trust is acquired, since one gets closer to a good approximation of the actual probabilistic behaviour of the program $s$. Obviously, if $s$ is not defined to compute according to the probability distribution $\calp$, increasing $n$ will only confirm that we should not trust $s$ to correctly compute according to $\calp$. In order to formally frame these subtleties, we introduce a notion of {\it confidence} associated to trust. Intuitively, this notion enables us to reason about our confidence in the potential behaviour of the program and about the way it evolves when the thoroughness of our experiments on that program increases. Technically, it defines a partial order on programs and is formally based on the probability of obtaining $\tert$ when we apply $\trust $ to experiments comprising an increasing number of executions of a given program. By an application of the strong law of large numbers, we prove that {\it confidence} tends, for series of experiments consisting of an increasing number of executions, to attribute complete trust to those terms that are suitably defined with respect to the target probability distribution.

The rest of the article is structured as follows. In Section \ref{sec:calculus}, we present the calculus \lamt{} by formally specifying  how its terms (Subsection \ref{sec:terms}) and its typing system (Subsection \ref{sec:types}) are defined. In Section \ref{sec:computation}, we present the reduction rules that determine the computational behaviour of  \lamt{}-terms. In Section \ref{sec:properties}, we prove some essential properties of \lamt{} such as subject reduction and termination. In Section \ref{sec:strategy-and-examples},  we define the call-by-name strategy that will be used to have \lamt{}-terms compute according to their intended behaviour, we show that this strategy is terminating, and we discuss some examples (Subsection \ref{sec:computational-examples}). In Section \ref{sec:potential-analysis}, we define the notion of {\it confidence} in order to reason globally on the behaviour of terms on the basis of the runtime results obtained by employing the $\trust$ operator. In  Section \ref{sec:conditional-probability}, we show how conditional probabilities and the probabilities of conjunctions and disjunctions can be handled. In Section \ref{sec:related}, we present a brief survey of the main existing formal systems for probabilistic reasoning, probabilistic computation and trust, and we compare them with \lamt{}. Finally, we conclude in Section \ref{sec:conclusion} with some remarks concerning future work directions.

\section{The calculus \lamt}\label{sec:calculus}

We define now the calculus \lamt. First, we define its terms and explain their intended computational behaviour. Then, we introduce their types and briefly explain the kind of computational behaviour they are meant to describe. Finally, we present and explain the rules to assign types to the terms of the calculus.  

\subsection{Terms of  \lamt}\label{sec:terms}
The terms of \lamt{} comprise the traditional elements of $\lam$-calculus: variables $x,y,z\ldots$, lambda abstractions $\lam x.t$ and applications $tu$. We also include terms  $\tert$ and $\terf$ of a Boolean type.
%{\ehhi IFTE: , and a rather standard operator $\ifte$ for conditional instructions.} 
The basic probabilistic component of \lamt is the constructor $\{\;\}$ which allows to build probabilistic terms of the form $\{p_1 t_1, \ldots , p_n t_n\}$. The intended behaviour of such a term is to non-deterministically reduce to one of the terms $t_1, \ldots , t_n$. The reduction of $\{p_1 t_1, \ldots , p_n t_n\}$ yields a particular term $t_i$ with $1\leq i\leq n $ non-deterministically selected. The probability that a particular term $t_i$ is selected is $p_i$. 
%The operator $\{\; \}$ thus implements a notion of probabilistic computation. 
The calculus also includes an experiment operator $\test_n $ that allows to test the probabilistic behaviour of terms by executing its argument $n$ times: $\test _n t$ reduces to the tuple $\lan t , \ldots , t \ran$ containing $n$ copies of the term $t$, thus enabling $n$ different executions of $t$. The projection operator $\pi_j$ just  extracts the $j$th element of a tuple $\lan t_1 , \ldots , t_n \ran$. The result $\lan t_1 , \ldots , t_n \ran$ of an experiment on a probabilistic term can be further analysed by the operator $\trust$. When the argument of this operator is a tuple $\lan t_1 , \ldots , t_n \ran$ of outputs resulting from an experiment, $\trust$ executes a check comparing the frequency distribution of the outputs in the tuple with a fixed probability distribution. If the frequency distribution of the outputs agrees, modulo a threshold $\thr$, with the fixed probability distribution, the term $\trust\lan t_1 , \ldots , t_n \ran\calpe$ reduces to $\tert$, otherwise to $\terf$. As will be formally defined by the type assignment rules in Table \ref{tab:typing}, the $\trust$ operator will only be applied to tuples or terms that are supposed to reduce to a tuple -- such as experiments, suitable applications, projections, or probabilistic terms. All elements of a tuple, moreover, will be required to have the same type. This is due to the fact that a tuple is supposed here to encode the result of an experiment consisting of several executions of the {\it same} program.

We now formally define the grammar of terms. The grammar determines what are the legitimate forms that terms can have. As per usual practice, the typing rules in Table \ref{tab:typing} will further restrict the ways we can construct a term depending on the type of its immediate subterms.

%Here is the formal definition of the terms of \lamt.
\begin{definition}[Terms]\label{def:terms} The
terms of \lamt{} are defined by the following grammar:
\begin{align*}x,y,z\ldots  \;\; ::=  \;\; & x_0 \;\; \mid\;\; x_1 \;\; \mid\;\; x_2\;\; \ldots  
\\
t,s,u,v\ldots  \;\; ::= \;\; & \tert \;\; \mid\;\;  \terf \;\; \mid\;\;  x  \;\; \mid\;\; \lambda x . t \;\; \mid\;\; tu \;\; \mid \\ & \{p_1 t_1, \ldots , p_n t_n\}\;\; \mid\;\; \test_n t \;\; \mid\;\; \lan t_1 , \ldots , t_n\ran \;\; \mid \;\; t\pi_j\;\; \mid
\\ &
\trust t 
% {\ehhi IFTE: \;\;\mid\;\; \ifte tuv} 
\end{align*}
where $n$ and $j$ are natural numbers such that $n\geq 1 $ and $j\geq 1$, and $p_1 , \ldots , p_n $ are rational  numbers such that $p_1 + \ldots + p_n = 1$.\end{definition}
For the sake of economy, we use a mixed notation in the grammars of terms and types. First, non-terminal symbols to the left of $::=$ also indicate the metavariables that we will use throughout the article. Second, for clarity, we slightly deviate from the traditional notation of generative grammars and distinguish different occurrences of the same non-terminal symbol by different subscripts, as in $\lan t_1, \ldots , t_n\ran$. Notice, moreover, that the indication of the probability distribution $\calp$ and of the threshold $\thr$ does not occur in untyped terms of the form $\trust t$. This is due to the fact that, technically, we treat $\calpe$ as a type annotation, just like  superscript types for variables in Church-style typing. The information  provided by $\calpe$ is indeed arbitrarily chosen at the moment of the typing of a term of the form $\trust t$ and essentially refers to the types of the terms involved.

\begin{definition}[Free and bound variables, closed terms]\label{def:free-bound}
The set of free variables $\FV (t)$ of a \lamt-term $t$ are inductively defined as follows:
\begin{itemize}
\item $\FV(\tert)=\FV(\terf)=\emptyset$
\item $\FV(x)=\{x\}$ 
\item $\FV(\lam x. t)=\FV(t)\setminus\{x\} $
\item $\FV(tu)=\FV(t)\cup \FV(u)$
\item $\FV(\{p_1 t_1, \ldots , p_n t_n\})=\FV(\lan t_1, \ldots , t_n\ran) = \FV(t_1)\cup  \ldots \cup \FV( t_n)$ 
\item $\FV(\test _n t )=\FV(t\pi_j) =\FV(\trust t)=\FV(t)$
%\item {\ehhi IFTE:  $\FV(\ifte tuv)=\FV(t)\cup \FV(u)\cup \FV(v)$}
\end{itemize}All the variables occurring in a term which are not free variables are the bound variables of the term. If a term has no free variables, we say it is closed.
\end{definition}

\begin{definition}[Substitution]
\label{def:substitution}
For any variable $x$ and terms $t,s$, the result of the substitution $t[s/x]$ of $s$ for $x$ inside $t$ is defined by induction on the structure of $t$ as follows.
\begin{itemize}
\item $\tert[s/x]=\tert$ and $\terf[s/x]=\terf$
\item $x[s/x]=s$ and $y[s/x]=y$ for any $y\neq x$
\item $(\lam x. t)[s/x]=\lam x. t$ and $(\lam y. t)[s/x]=\lam y. t[s/x]$ for any $y\neq x$
\item $(tu)[s/x]= t[s/x] u[s/x]$
\item $\{p_1 t_1, \ldots , p_n t_n\}[s/x]=\{p_1 t_1[s/x], \ldots , p_n t_n[s/x]\}$
\item $\lan t_1, \ldots , t_n\ran [s/x]=\lan t_1[s/x], \ldots , t_n[s/x]\ran $ 
\item $(\test _n t )[s/x]= \test _n t[s/x]$
\item $(t\pi_j)[s/x] =t[s/x]\pi_j$
\item $(\trust t)[s/x]=\trust t[s/x]$
% \item {\ehhi IFTE: $(\ifte tuv)[s/x]=\ifte t[s/x]u[s/x]v[s/x]$}
\end{itemize}
\end{definition}

\subsection{Types of \lamt}\label{sec:types}

The types used to classify and operate on \lamt-terms comprise, as usual, arbitrarily many atomic types $q_1, q_2,q_3\ldots $, and the arrow types $A\impl B$ for functions that accept inputs of type $A$ and produce outputs of type $B$.
%{\option , and the $\fal$ type for inert terms that do not have any specific external behaviour.} 
The type of probabilistic terms is the sum type $\cho_{i=1}^n A_i$ for functions that non-deterministically behave as specified by one of the types $A_i$. The type $(A)^n$ is used for tuples of length $n$ only containing elements of type $A$. This type is  intended for experiment results, that is, tuples of possibly different values obtained from a single term. Finally, the types of \lamt{} also include a Boolean type $\bool $ enriched by the indication of a probability distribution $p_1B_1, \ldots , p_m B_m$ over a list of types $B_1, \ldots ,  B_m$. This enriched version of the traditional Boolean type is used for the results of trustworthiness checks on probabilistic programs.  In simple words, a term of type $\bool (p_1B_1, \ldots , p_m B_m)(\thr)$ encodes an answer to the question ``do we trust this program to compute according to the distribution $p_1B_1, \ldots , p_m B_m$, modulo the error threshold $\thr$?''. 

\begin{definition}[Types]\label{def:types}
The types of  \lamt{} are defined by the following grammar:
\begin{align*}  
q\ldots  \quad ::=   \quad  &   q_0 \quad \mid\quad q_1 \quad \mid\quad q_2\quad \ldots  \\
A, B, C\ldots \quad ::=  \quad & q 
%{\option \quad \mid \quad \fal }
\quad \mid \quad A\impl B \quad \mid\quad \cho _{i=1}^{n} A_i  \quad \mid
\\&  (A)^n \quad \mid\quad \bool (p_1B_1, \ldots , p_n B_n) (\thr) 
\end{align*}where $n$ is a natural number, $n\geq 1$, $p_1,  \ldots , p_n , \thr \in [0,1]$ and $p_1 +  \ldots +p_n =1$.
\end{definition} 

We now introduce the notion of {\it subtype} in order to add some flexibility to the types in the calculus and, in particular, to be able to consider any possible outcome $t_j:A_j$ of a non-deterministic choice as a suitable reductum of a probabilistic choice term $\{p_1 t_1 , \ldots, p_n t_n\}:\cho _{i=1}^{n}A_i$. Usually, term reduction is type preserving. Thus, all functions and arguments compatible with the redex are compatible with the type of the reductum. In our case, though, we cannot enforce this simple identity since the type of $\{p_1 t_1 , \ldots , p_n t_n\}:\cho _{i=1}^{n}A_i$ is not the same as that of $t_j:A_j$, its possible reducti. Instead, we define and use a relation that specifies when terms of a certain type can be thought of and used as terms of another, more general type. For instance, given a term $t$ encoding the number $2$, it can be seen both as a term of type {\it integer} and as a term of type {\it natural}. In the second case, we are just being more specific about the nature of $t$. In any case, $t$ is a valid argument for a function accepting integers and for a function accepting natural numbers. It is reasonable then to define the typing rules in such a way that a function that accepts inputs of a certain type $T$ also accepts inputs of a more specific type than $T$ or, in other words, inputs with  a subtype of $T$ as type. The general idea is that, if $A$ is a subtype of $B$, then a term of type $A$ can be used in all cases in which a term of type $B$ is required. We intuitively justify the following definition of the subtyping relation accordingly. 
%{\option As for the falsity type, we can simply use any term as if it were of type $\fal$ since we cannot do anything with a term of that type. }
As for the boolean type, the information about $\calpe$ in a type $\bool \,\calpe$ does not change what constructs we can use on a term of type 
$\bool \,\calpe$. 
%{\ehhi IFTE: That is, we can always use an $\ifte $ construct on a term of type $\bool \,\calpe$ regardless of what $\calpe$ is. }
As for function types, in order to use a function of type $A_1\impl A_2 $ as if it were of type $B_1\impl B_2$ we need $A_1$ to be more general than $B_1$ -- which means that the function of type $A_1\impl A_2$ accepts any input accepted by a function of type $B_1\impl B_2$; and we need $A_2$ to be more specific than $B_2$ -- thus any output of our function of type $A_1\impl A_2 $ can be used as if it were an output of a function of type $B_1\impl B_2$. As for the choice types, we would like any possible outcome of a choice of type $\cho _{i=1}^{m} A_j$ to be usable as if it were an outcome of a choice of type $\cho _{i=1}^{n} B_i$. Hence we need to have that any $A_j$ can be thought of as some $B_i$ -- that is, for any $A_j$, it should be a subtype of some $B_i$. Thus we are guaranteed that any function or argument compatible with any outcome of the choice of type $\cho _{i=1}^{n} B_i$ will also be compatible with $A_j$.\footnote{Notice that, as discussed in detail in Section \ref{sec:related}, the subtype relation defined in \cite{dip20} is rather different from the one defined here since the former relation admits the possibility of eliminating choices from a choice term if they are not compatible with the argument given to the choice term itself.} Finally, as for the tuple types, a subtype of $(B)^n$ indicates that the elements of the tuple are of a subtype of $B$ and that the tuple has at least $n$ elements. Thus, if the $j${th} element can be extracted from a tuple of type $(B)^n$  and used as a term of type $B$, the same holds for a tuple typed by a subtype of $(B)^n$.   

\begin{definition}[Subtype]\label{def:subtype}
Given a relation $\cals$ defined on atomic types, the subtyping relation $<:$  extending $\cals$ to complex types is  defined as follows:
\begin{itemize}
\item for any type $A$, $A<:A$
\item for any three types $A,B,C$, if $A<:B$ and $B<:C$, then $A<:C$
%{\option \item for any type $A$, $A<:\fal$
%\item {\option  for any two annotations $\calpe$ and $\calp'\thr '$, $\bool \,\calpe <:\bool \,\calp'\thr '$} 
\item if $A$ and $B$ are atomic types and $A\cals B$, then $A<:B$ 
\item if $B_1<:A_1$ and $A_2<:B_2$, then $A_1\impl A_2<:B_1\impl B_2$
\item if for any $A_j$ with $1\leq j\leq m$ there is a $B_i$ with $1\leq i\leq n$ such that $A_j<:B_i$, then $\cho _{i=1}^{m} A_j  <: \cho _{i=1}^{n} B_i$
%\item {\ehhi DI PIERRRO STYLE: if for any $B_i$ with $1\leq i\leq n$ there is an $A_j$ with $1\leq j\leq m$ such that $A_j<:B_i$,} then $\cho _{i=1}^{m} A_j  <: \cho _{i=1}^{n} B_i$
%\item  {\ehhi BOTH SIDES STYLE: if for any $B_i$ with $1\leq i\leq n$ there is an $A_j$ with $1\leq j\leq m$ such that $A_j<:B_i$ and for any $A_j$ with $1\leq j\leq m$ there is a $B_i$ with $1\leq i\leq n$ such that $A_j<:B_i$,} then $\cho _{i=1}^{m} A_j  <: \cho _{i=1}^{n} B_i$
\item if $A <: B$ and $m\geq n$, then $(A)^m <: (B)^n$
\end{itemize}
\end{definition}

%We now have all the elements to present the typing of \lamt-terms.
The type assignment rules are presented in Table \ref{tab:typing}.

\begin{table}[ht]
  \centering

  \hrule
\medskip

\[\infer{x:A\Rightarrow x:A}{} \qquad  \infer{\Rightarrow \tert :\bool \,\calpe}{} \qquad \infer{\Rightarrow \terf :\bool \,\calpe}{} \]\[ 
\infer{\Gamma\Rightarrow \lam x^A . t : A\impl B}{\Gamma,x:A\Rightarrow t:B}
\qquad \infer{\Gamma,\Delta \Rightarrow tu : B}{\Gamma\Rightarrow t :A\impl B & \Delta \Rightarrow u: A}
\]
\[  
\infer{\Gamma_1,\dots,\Gamma_n\Rightarrow  \{p_1 t_1, \ldots , p_n t_n\}: \cho _{i=1}^{n}  A_i}{\Gamma_1\Rightarrow  t_1:A_1 & \ldots & \Gamma_n\Rightarrow t_n:A_n}\]
\[ 
\infer{\Gamma,\Delta\Rightarrow \{p_1 t_1, \ldots , p_n t_n\} t : \cho _{i=1}^{n}   D_i}{ \Gamma\Rightarrow \{p_1 t_1, \ldots , p_n t_n\}: \cho _{i=1}^{n}  (C_i\impl D_i) & \Delta\Rightarrow t:C} \]
%\[
%\infer{\test _n s : A>n> q_1B_1\chop \ldots \chop q_m B_m}{s: A}\qquad 
%\infer{\lan s_1 , \ldots , s_n \ran : E >n> q_1B_1\chop \ldots \chop q_m B_m}{s_1:E_1  & \ldots &  s_n:E_n}\]
\[
\infer{\Gamma\Rightarrow \test _n t : (A)^n}{\Gamma\Rightarrow t: A}\qquad 
\infer{\Gamma_1,\dots,\Gamma_n\Rightarrow \lan s_1 , \ldots , s_n \ran : (E)^n}{\Gamma_1\Rightarrow s_1:E_1  & \ldots &  \Gamma_n\Rightarrow s_n:E_n}\qquad 
\infer{\Gamma\Rightarrow t\pi_j:A}{\Gamma\Rightarrow t: (A)^n}\]
\[ \infer{\Gamma\Rightarrow \trust t\calpe :\bool \,\calpe}{\Gamma\Rightarrow t:(A)^n}
% {\ehhi IFTE: \qquad\infer{\ifte t u v :G}{t:\bool \,\calpe& u:G_1&v:G_2}}
\]\begin{flushleft}where $p_1, \ldots , p_n\in[0,1]$ and $p_1+\ldots + p_n=1$; 
%{\ehhi $s$ does not contain occurrences of $\test$;} 
%{\ehhi all variables in $ s_1 , \ldots , s_n $ are bound;} 
$A_1, \ldots , A_n$ are all subtypes of a type $A$; $C$ is a subtype of all types $C_1, \ldots , C_n$; $E_1, \ldots , E_n$ are subtypes of $E$; $1\leq j\leq n$; $\calp =  q_1B_1, \ldots , q_m B_m$ and $\thr$ is a rational number.
%; $G_1$ and $G_2$ are subtypes of $G$.
\end{flushleft}

\medskip

\hrule
\caption{Type assignment rules of \lamt}
\label{tab:typing}
\end{table}

The 0-premises rules in Table \ref{tab:typing} simply present the basic type assignments for atomic terms: traditional variables can be assigned any type, and $\tert,\terf$ can be assigned the type $\bool \,\calpe$ for any annotation $\calpe$. As usual, we type $\lam$-abstractions by arrow types. 

Probabilistic terms $\{p_1 t_1, \ldots , p_n t_n\}$ are typed by using the non-deterministic sum connective $\cho$, as expected. Applications of probabilistic choices over terms with arrow types $C_i\impl D_i$ are typed by the non-deterministic sum $\cho$ of the types $D_i$ of their outputs. Notice that the limit case in which $n=1$ gives the usual application rule for $\lam$-abstractions. 

The typing rule for $\test _n$ indicates that the result of the application of this operator to any term $t:A$ must have type $(A)^n$: this type is indeed the one used for tuples of length $n$ of objects of type $A$, which is exactly the kind of term that the reduction of $\test _n t$ will yield. In turn, the typing rule for $\lan s_1, \ldots , s_n \ran $ enables us to type the result of an experiment $\test _n s $. In this rule we require that the variables of $s_1, \ldots , s_n$ are bound since the test  $\test _n s $ can only be executed if all variables of $s$ are bound -- as it will be specified in Table \ref{tab:reductions}. Moreover, the type $(S)^n$ of the whole tuple $\lan s_1, \ldots , s_n \ran $ is based on a supertype $S$ of the types $S_1, \ldots , S_n$ since all elements of the tuple $s_1, \ldots , s_n $ are supposed to be obtained by reducing a unique term $s$ that we can assume to have type $S$.\footnote{In \lamt{}, a term of the form $\lan s_1, \ldots , s_n \ran $ is supposed to be produced by reducing a term $\test_n s$ that implements an experiment on a term $s:S$. As a consequence, even if the elements of the tuple may have different types $S_1, \ldots , S_n$, all their types are supposed to be subtypes of the type $S$ of $s$. Indeed, $ s_1:S_1, \ldots , s_n:S_n$ are supposed to be reduced forms of the term $s:S$. The tuple should then be typed as follows: $\lan s_1, \ldots , s_n \ran :(S)^n$. In some cases, though, there might also exist a type $S'\neq S$ such that  $S_1, \ldots , S_n$ are subtypes of $S'$. In these cases, one might also choose to type the tuple as follows: $\lan s_1, \ldots , s_n \ran :(S')^n$. Nevertheless, this possibility is not at all problematic with respect to the interaction between $\test $ and $\lan\;\ran$, as shown in Theorem \ref{thm:subject-reduction}.} The typing rule for the generalised projection $\pi_j$ just matches the usual typing rules for projections: the type of $t\pi_j$ is the type -- or a supertype of the type -- of each element of the tuple $t$.

The typing rule of the $\trust $ operator accomplishes two tasks at once. First, it guarantees that  $\trust $ is applied to a term $t$ of type $(A)^n$ -- which means that $t$ encodes an experiment consisting in testing $n$ times a term of type $A$. Second, it allows to freely choose the probability distribution and threshold indication $\calpe$ contained in the type $\bool \,\calpe$ to be assigned to $\trust t$. The chosen annotation $\calpe$, in particular, will indicate the probability distribution against which the frequency distribution encoded by the experiment $t$ will be tested for trustworthiness and the threshold $\thr$ employed during the test. 
%{\ehhi IFTE: The rule for the $\ifte$ construct specifies that the type $\bool \,\calpe$ will also be used as a regular Boolean type: if we execute the check for trustworthiness encoded in a term $\trust t:\bool \,\calpe$ and this term reduces to $\tert$ or $\terf$, then we can employ the result of the trustworthiness check as argument of a conditional instruction constructed by $\ifte$.} 
Notice that since the information provided by $\calpe$ cannot be inferred from the structure of the terms of type $\bool  \,\calpe$, we annotate when required terms of this form by $\calpe$- -- as it is done for the types of variables in Church-style typing.

\section{Computational rules}\label{sec:computation}

We define now the dynamic computational behaviour of the terms of \lamt{} by introducing the reduction rules for typed terms, and the reduction strategy that we will employ to evaluate them. Before this, we present some definitions  that will be employed in the reduction of applications of the $\trust$ operator.

We introduce the notion of {\it value}, that is, a term in the correct form to be an output, i.e. the final result of a computation.
\begin{definition}[Values]\label{def:value}
Any \lamt-term $t:T$ is a value if, and only if, it has one of the following forms:
\begin{itemize}
\item $\tert$ or $\terf$
\item $\lam x . t$
\item $\lan s_1, \ldots , s_n \ran $ where each $s_1, \ldots , s_n$ is a value
\end{itemize}
\end{definition}

When we want to reduce a term of the form $ \trust \lan s_1 , \ldots , s_n \ran  \calpe  $, we need to consider the experiment result $\lan s_1 , \ldots , s_n \ran$ and decide whether it establishes or not that the term on which the experiment was run is trustworthy. In order to do so, we compute a probabilistic divergence between the frequency of the types of the terms $ s_1 , \ldots , s_n$ and the target distribution $\calp$. Afterwards, we compare the value of the divergence with the threshold $\thr$. The divergence that we use here is the {\it  total variation distance} ($\dist$). This choice is simply due to the fact that $\dist$ is a quite simple divergence and a very commonly used one, but other divergences can be used instead of $\dist$ without introducing any further complications in the calculus. Here are two formal definitions determining the method we use to compute $\dist$ between a list of terms and a target distribution.

\begin{definition}[Total variation distance]
The total variation distance $\dist (\calp\parallel \calp')$ between any two probability distributions $\calp = (p_1 T_1, \ldots , p_n T_n)$ and $\calp '= (p'_1 T'_1, \ldots , p'_{n'} T'_{n'})$ is defined as follows: 
\begin{itemize}
\item if neither $\{T_1, \ldots ,  T_n\}\subseteq \{T'_1, \ldots , T'_{n'}\} $ nor  $\{T'_1, \ldots , T'_{n'}\} \subseteq\{T_1, \ldots ,  T_n\}$, then $\dist (\calp\parallel \calp')=1$
\item otherwise, let 
\begin{itemize}
\item $S_1, \ldots , S_m $ be an enumeration of the types occurring in $\{T_1, \ldots ,  T_n\}\cup \{T'_1, \ldots , T'_{n'}\} $
\item $q_i$ (respectively $q'_i$), for $1\leq i\leq m$, be the sum of the probabilities attributed to $S_i$ in $\calp$ (respectively in $\calp '$)  if any,  $0$ otherwise
\end{itemize}and then 
\[\dist (\calp\parallel \calp')= \sup  _{i=1}^{m} (\lvert {q_{i}-q'_{i}}\rvert)\]
where $\vert p\rvert=p$ if $p$ is positive and $\vert p\rvert=-p$ if $p$ is negative.
%; and $(\sup  _{i=1}^{m} n_i)= n \geq n_i$ for all $1\leq i\leq m$. 
\end{itemize}
\end{definition}In this definition we employ the enumeration $S_1, \ldots , S_m $ and the sums of probabilities $q_i$ and $q'_i$ rather than directly employing $T_1, \ldots , T_n, T'_1, \ldots , T'_{n'}$ and the probabilities $p_i$ and $p'_i$ because a single type $S_i$ might occur more than once in each distribution $\calp$ and $\calp'$ -- this is not forbidden, indeed, by Definition \ref{def:types} -- and because types might be differently ordered in  $\calp$ and $\calp'$.

\begin{definition}[Frequency--probability correspondence]
For any term of the form $t=\lan s_1 , \ldots , s_n \ran $ and any distribution $\calp = q_1B_1, \ldots , q_m B_m$, $t$ encodes a frequency that corresponds to $\calp $ (in short, $t \encodes\calp$) if, and only if, 
\begin{itemize}
\item all terms $s_1, \ldots , s_n$ are values
\item for $1\leq i\leq n$, $S_i$ is the type of  $s_i$
%\item for $1\leq i\leq n$, $S_i$ is the most specific type (the smallest with respect to the subtype relation $<:$) that can be assigned to $s_i$ without $S_i$ being a proper subtype of a type among $B_1, \ldots ,  B_m$ 
\item for any $j$ such that $1\leq j\leq m$, the occurrences of subtypes of $B_j$ in $S_1, \ldots , S_n$ are exactly $q_j\cdot n$    
\end{itemize}
\end{definition}

The reduction rules are presented in Table \ref{tab:reductions}.

\begin{table}[ht]
  \centering

  \hrule
\medskip

$(\lam x.t )u \;\;\red_1\;\; t[u/x]$  
\medskip\medskip

%{\option $\lam x. \{p_1 t_1, \ldots , p_n t_n\} \;\;\red_1\;\; \{p_1 \lam x. t_1, \ldots , p_n \lam x.t_n\}$  FR: useless?}

%SUBJECT REDUCTION: we need $(A\impl B)\oplus (A\impl  C) <:A\impl (B\oplus C) $ (here $A$ is fixed since it is the type of $ x$). Hence, we need that both $(A\impl B)$ and $(A\impl  C)$ are subtypes of $A\impl (B\oplus C)$. But, by Definition \ref{def:subtype}, they are since $A$ is as general as $A$ and both $B$ and $C$ are subtypes of   $B\oplus C$.

%{\option $\{p_1 t_1, \ldots , p_n t_n\} u \;\;\red_1\;\; \{p_1 t_1u, \ldots , p_n t_nu\}$ FR: useless?}
%\medskip\medskip

%SUBJECT REDUCTION: assume that $\{p_1 t_1, \ldots , p_n t_n\}:(A\impl B)\oplus (A'\impl  C)$ and that $u:A''$ where $A''<:A$ and $A''<:A'$. We need $B\oplus C$ to be a subtype of $B\oplus C$ -- that is, the application of an $(A\impl B)\oplus (A'\impl  C)$ to an $A''$. But this is obviously the case given Definition \ref{def:subtype}.

$\{p_1 t_1, \ldots , p_n t_n\}  \;\;\red_{p_i}\;\; t_i  $ where $1\leq i \leq n $\medskip\medskip

%SUBJECT REDUCTION: Any $A_j$, for $1\leq j\leq n$, is a subtype of $\cho _{i=1}^{n}$. Indeed, $A_j=\cho _{k=1}^{1}B_k$ where $A_j=B_1$ and, by Definition \ref{def:subtype}, $\cho _{k=1}^{1}B_k$ is a subtype of $\cho _{i=1}^{n}$ since any immediate subformula of $\cho _{k=1}^{1}B_k$ is a subtype of one immediate subformula of $\cho _{i=1}^{n}$.

$\test _n s\;\;\red_1\;\;\lan s , \ldots , s \ran$

% \;\;$ if $s$ is a closed term
\medskip\medskip

$\lan s_1 , \ldots , s_n \ran\pi _i\;\;\red_1\;\; s_i \;\;$ where $1\leq i\leq n$\medskip\medskip

$\trust \lan s_1 , \ldots , s_n \ran  \calpe  \;\;\red_1\;\; \tert $ 

if $\lan s_1 , \ldots , s_n \ran\encodes \calp'$ and $\dist( \calp \parallel \calp ') \leq \thr $
\medskip\medskip

$\trust \lan s_1 , \ldots , s_n \ran   \calpe \;\;\red_1\;\; \terf $ 

if $\lan s_1 , \ldots , s_n \ran \encodes \calp' $ and $\dist( \calp \parallel \calp ')  > \thr $\medskip\medskip

%{\ehhi IFTE: $\ifte \tert \,t\,s \;\;\red_1\;\; t\qquad $ and $\qquad \ifte \terf \,t\,s \;\;\red_1 s$
%}

\medskip

\hrule
\caption{Reduction rules of \lamt}
\label{tab:reductions}
\end{table} 

\begin{definition}[Redexes, reducti and reductions]
\label{def:redex}
We call a term of the form displayed to the left of any occurrence of $\red$ in Table \ref{tab:reductions} a redex and the corresponding term displayed to the right of $\red$, its reductum. If $t$ and $t'$ are terms such that $t\red_p t'$ is an instance of a reduction rule in Table \ref{tab:reductions}, $t$ occurs inside a term $s$ (possibly, $t=s$) and $s'$ is obtained by replacing one occurrence of $t$ in $s$ by $t'$, then we say that $s$ reduces to $s'$ with probability $p$. We indicate this by writing $s\red_p s'$.    
\end{definition}

The reduction for applications of $\lam$-abstractions is the usual one. Simply notice that we specify that $(\lam x . t)u$ reduces to $t[u/x]$ with a probability of $1$ since this is not a probabilistic reduction but a deterministic one. We do the same for all other deterministic reductions. 
%{\option Similarly, the permutations of $\lam$-abstractions and applications with probabilistic choice operators $\{\;\}$ are deterministic reductions.} 
The reduction of a probabilistic choice, on the other hand, is  governed by a probabilistic reduction rule: the notation $\{p_1 t_1, \ldots , p_n t_n\} \;\red_{p_i}\; t_i  $ means in particular that the term $\{p_1 t_1, \ldots , p_n t_n\}$ can non-deterministically reduce to any term among $t_1, \ldots , t_n$ and that the probability that it will actually reduce to a particular term $t_i$ is exactly $p_i$. 

The reduction rule for $\test _n$ enables us to reduce a term $\test_n t$ to a tuple containing $n$ times the term $t$. 

%This reduction rule, though, only triggers a reduction if $\test _n$  is applied to a term with no free variables. By this limitation we force the experiment to be conducted on terms that are not waiting for further arguments from the computational environment. Intuitively this means that we forbid any interaction between the tested program and the external environment after the experiment has started. Thus we avoid the possibility of choosing from the outside the behaviour of any part of the tested program and we guarantee that the result of the experiment is really based on the probabilistic behaviour of the term as locally defined. 

The reduction rule for the projection $\pi_j$ simply enables us to extract the $j$th element of the tuple to which the projection is applied.

The reductions for $\trust$ are triggered when the operator is applied to a tuple only containing values $s_1, \ldots , s_n$ and hence codifying a frequency distribution over the possible outputs of the term from which we obtained $s_1, \ldots , s_n$.\footnote{Notice that, according to the typing rule for $\lan\; \ran$, if the term $\lan s_1, \ldots , s_n \ran$ is typable, then there is a type $S$ such that the types of $ s_1, \ldots , s_n$ are subtypes of $S$. Thus we know, according to the typing rule for $\{\;\}$, that there are typable probabilistic choice terms of the form $\{p_1s_1, \ldots , p_ns_n\}$. Since $s_1, \ldots , s_n$ are values that can be obtained by reducing these probabilistic choice terms, we know that if $\lan s_1, \ldots , s_n \ran$ is typable, then there is at least one term such that the tuple $\lan s_1, \ldots , s_n \ran$ encodes one possible outcome of the experiment on this term.} If this distribution actually matches -- modulo the threshold $\thr$ -- the desired probability distribution $\calp$ against which we are conducting the trustworthiness check, then the result of the application of $\trust$ is $\tert$; otherwise, the result is $\terf$.

%{\ehhi IFTE: The reduction rules for $\ifte $ implement the usual behaviour of conditional instructions: if the first argument is $\tert$, then the term reduces to the second argument; if the first argument is $\terf$, then the term reduces to the third argument.
%}

\section{Properties of the calculus \lamt}
\label{sec:properties}

The main results of this section are the Thm. \ref{thm:subject-reduction} for subject reduction and Thm. \ref{thm:logical-termination} for termination. The former guarantees that typing is robust with respect to the computation, the latter that any typed \lamt-term can be reduced in finite time to a term that does not contain redexes. Before proving these results we show that substitutions preserve typability modulo the subtype relation.

%{\ehhi In order to do so, we prove  some useful lemmata.}

\begin{lemma}[Substitution lemma]\label{lem:substitution}
For any $x:A$, $t:B$, $s:A'$ such that $A'<:A$, there exists $B'<:B$ such that $t[s/x]:B'$.
\end{lemma}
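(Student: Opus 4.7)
The plan is to proceed by structural induction on $t$, mirroring the typing derivation and tracking how each typing rule propagates the subtype relation established for the subterms. The statement is tailored so that the induction hypothesis yields, for each immediate subterm of $t$, a substituted version whose type is a subtype of the original; the appropriate clauses of Definition \ref{def:subtype} then recombine these to produce a subtype of $B$ for $t[s/x]$.

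For the base cases, if $t$ is $\tert$, $\terf$, or a variable distinct from $x$, then $t[s/x]=t$ and one takes $B'=B$. If $t=x$, then from $x:A$ and $t:B$ we have $B=A$, so $t[s/x]=s$ has type $A'$ and one takes $B'=A'$. For compound terms, the $\lambda$-abstraction $t=\lam y.u : A_1\impl A_2$ (with $y\neq x$; the subcase $y=x$ is trivial, as $t$ is unchanged) uses the IH on $u$ to obtain $u[s/x]:A_2'$ with $A_2'<:A_2$, and then assigns $t[s/x]$ the type $A_1\impl A_2'$, which is a subtype of $A_1\impl A_2$ by covariance of the codomain. The cases for $\{p_1t_1,\ldots,p_nt_n\}$, $\test_n u$, $\lan s_1,\ldots,s_n\ran$, $u\pi_j$, and $\trust u$ proceed analogously: one applies the IH to every subterm, then invokes the subtyping clause for the matching constructor---covariance of $\cho$, covariance of tuples in the element type together with the length inequality $m\geq n$, and, for $\trust$, the fact that the result type $\bool\,\calpe$ is freely chosen and therefore unaffected.

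The subtle case is application, which the paper subsumes under the probabilistic choice application rule with the limit $n=1$ giving the ordinary rule. Suppose $t=\{p_1t_1,\ldots,p_nt_n\}u:\cho_{i=1}^n D_i$ derived from premises $t_i:C_i\impl D_i$ and $u:C$ with $C<:C_i$. The IH on each $t_i$ yields $t_i[s/x]:X_i$ with $X_i<:C_i\impl D_i$, and the IH on $u$ yields $u[s/x]:C''$ with $C''<:C$. The step that requires care is an inversion lemma: any subtype $X_i$ of $C_i\impl D_i$ must itself be of the form $C_i'\impl D_i'$ with $C_i<:C_i'$ and $D_i'<:D_i$. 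This follows by a short induction on the derivation of $X_i<:C_i\impl D_i$, since the only clauses of Definition \ref{def:subtype} that can produce a subtype of an arrow type are reflexivity, transitivity, and the arrow rule itself (the atomic clause does not apply because $C_i\impl D_i$ is not atomic). With the inversion in hand, transitivity gives $C''<:C_i'$, so the application rule reapplies to yield $t[s/x]:\cho_{i=1}^n D_i'$, a subtype of $\cho_{i=1}^n D_i$ by covariance of $\cho$. I expect this inversion argument, which must be stated and used implicitly again in the projection and $\trust$ cases to show that any subtype of $(A)^n$ is of the form $(A')^m$ with $A'<:A$ and $m\geq n$, to be the main technical obstacle; everything else is a bookkeeping exercise in covariance.
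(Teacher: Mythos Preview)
Your approach is correct and covers the same cases as the paper, but the two proofs differ in how they handle the application (and, implicitly, the projection and $\trust$) case. The paper does not prove an inversion lemma for the subtyping relation; instead it strengthens the induction hypothesis, proving that in addition to $B'<:B$ one has: if $B=B_1\impl B_2$ then $B'=B_1\impl B_2'$, and if $B=\cho_{i=1}^{n} C_i$ then $B'=\cho_{i=1}^{n} C'_i$. With this shape-preservation baked into the IH, the application case goes through because the IH applied to the function part directly yields a type of the form $\cho_{i=1}^{n}(C_i\impl D'_i)$, and no inversion on $<:$ is needed. Your route---keeping the statement as given and appealing to inversion lemmas (subtypes of arrows are arrows, subtypes of tuple types are tuple types)---is equally valid and arguably cleaner, since the inversion facts are reusable and their proofs are short inductions on the derivation of $<:$, exactly as you sketch. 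The paper's route avoids stating and proving these auxiliary lemmas at the cost of a slightly heavier induction load; in fact, in its application case the paper still tacitly reads off $D'_i<:D_i$ from $\cho_{i=1}^n(C_i\impl D'_i)<:\cho_{i=1}^n(C_i\impl D_i)$, which is a mild inversion step anyway.
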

\begin{proof}We prove the following stronger statement: for any $x:A$, $t:B$, $s:A'$ such that $A'<:A$, there exists $B'<:B$ such that $t[s/x]:B'$. Moreover, if $B=B_1\impl B_2$ then $B'=B_1\impl B_2'$, and if $B=\cho _{i=1}^{n}   C_i$, then $B'=\cho _{i=1}^{n}   C'_i$.

The proof is by induction on the structure of $t$.
\begin{itemize}
\item $t=y$. Either $x\neq y$ or $x=y$. If $x\neq y$, then $t[s/x]=y[s/x]=y:B$ and the statement holds due to the reflexivity of $<:$. If, on the other hand, $x=y$, then $t[s/x]=x[s/x]=s:A'$. But since $t=y=x$, we also have that  $A=B$, which together with $A'<:A$ gives us that $A'<:B$. Hence, the statement holds.

\item $t=\tert$ or $t=\terf$. In both these cases, $t[s/x]=t$ and the statement holds because of the reflexivity of $<:$.  

\item $t=\lam y.u $. Either $x\neq y$ or $x=y$. If $x=y$, then $t[s/x]=t$ and the statement holds due to the reflexivity of $<:$. If, on the other hand, $x\neq y$, then $t[s/x]=(\lam y.u)[s/x]=\lam y.u[s/x]$. Now, since $t=\lam y.u :B$, $B$ must be of the form $B_1\impl B_2$, $y:B_1$ and $u:B_2$. By inductive hypothesis, the statement holds for $u[s/x]$ and thus $u[s/x]:B_2'$ such that $B_2'<:B_2$. But then, $\lam y u[s/x]:B_1\impl B_2'$. Since, by Definition \ref{def:subtype}, $B_1\impl B_2'<:B_1\impl B_2$ when $B_2'<:B_2$, we have that $t[s/x]=\lam y u[s/x]:B_1\impl B_2'<:B_1\impl B_2=B$, and thus the statement holds.

\item $t= uv$. Then  $t[s/x]=u[s/x]v[s/x]$.  By inductive hypothesis, the statement holds for $u[s/x]$ and $v[s/x]$. Since $t=uv :B$, $B=\cho _{i=1}^{n}   D_i$, $u:\cho _{i=1}^{n}  (C_i\impl D_i) $ and $v:C$ where $C$ is a subtype of each $C_i$ for $1\leq i \leq n$. Therefore, $u[s/x]:\cho _{i=1}^{n}  (C_i\impl D'_i) $ and $v[s/x]:C'$ where $\cho _{i=1}^{n}  (C_i\impl D'_i) <:\cho _{i=1}^{n}  (C_i\impl D_i)$ and $C'<:C$. We have then that $C'$ is still a subtype of each $C_i$ for $1\leq i \leq n$. But then, $u[s/x]v[s/x]: \cho _{i=1}^{n}   D'_i$ where, by Definition \ref{def:subtype}, $D'_i<:D_i$ because otherwise we would not have that $\cho _{i=1}^{n}  (C_i\impl D'_i) <:\cho _{i=1}^{n}  (C_i\impl D_i)$. By putting everything together, we obtain that $t[s/x]=u[s/x]v[s/x]: \cho _{i=1}^{n}   D'_i<: \cho _{i=1}^{n} D_i =B$ and hence that the statement holds.

\item $t= \{p_1 t_1 , \ldots , p_nt_n\}$. Then  $t[s/x]=\{p_1 t_1[s/x] , \ldots , p_nt_n[s/x]\}$.  By inductive hypothesis, the statement holds for all $t_i[s/x]$. Since $t= \{p_1 t_1 , \ldots , p_nt_n\} :B$, $B=\cho _{i=1}^{n}     C_i$ and each $t_i:  C_i$. Hence we have $t_i[s/x]:  C_i'$ where $   C'_i <:   C_i$. But then, $t[s/x]=\{p_1 t_1[s/x] , \ldots , p_nt_n[s/x]\}: \cho _{i=1}^{n}   D'_i<: \cho _{i=1}^{n} D_i =B$ and hence the statement holds.

\item $t= \test _n u $. Then  $t[s/x]=\test _n u[s/x]$.  By inductive hypothesis, the statement holds for $u[s/x]$. Since $t= \test _n u:B$, $B=(C)^n$ and $u:C$. Hence we have $u[s/x]:C'$ where $ C' <: C$. But then, $t[s/x]=\test _n u[s/x]: (C')^n<:(C)^n =B$ and hence the statement holds.

\item    $t= \lan  t_1 , \ldots , t_n \ran  $. Since $t= \lan  t_1 , \ldots , t_n \ran  :B$, $B=(C)^n$ such that, for each $t_i:C'$, $C'<:C$ holds. By inductive hypothesis, the statement holds for each $t_i[s/x]:C'$ such that $1\leq i\leq n$. Therefore, $t_i[s/x]:C''$ where $ C'' <: C'$. But then, since $C''<:C'<:C$, $t[s/x]= \lan  t_1[s/x] , \ldots , t_n[s/x] \ran: (C)^n<:(C)^n =B$ is well-typed and the statement holds.

%Then  $t[s/x]=\lan  t_1[s/x], \ldots  , t_n[s/x] \ran$.  By inductive hypothesis, the statement holds for all $t_i[s/x]$. Since $t= \lan  t_1 , \ldots , t_n \ran : B$, $B=(C)^n$ and $t_i:C_i'$ where $C_i'<:C$. Hence we have $t_i[s/x]:C_i''$ where $ C_i'' <: C_i'<:C$. But then there is a $C'$ such that $t[s/x]=\lan  t_1[s/x], \ldots  , t_n[s/x] \ran : (C')^n<:(C)^n =B$. Hence the statement holds.

\item $t= u\pi_j$. Then  $t[s/x]= u[s/x]\pi _j$.  By inductive hypothesis, the statement holds for $u[s/x]$. Since $t=  u\pi_j : B$, we have $u:(B)^m$. Hence we have $u[s/x]:C$ where $ C <: (B)^m$. But, by Definition \ref{def:subtype}, $C$ must be $(B')_l$ where $B'<:B$ and $l\geq m$. Hence, $u[s/x]\pi_j$ is well defined -- since $j\leq m\leq l$ -- and $u[s/x]\pi_j:B'$. Putting everything together we have that $t[s/x]=u[s/x]\pi_j:B'<:B$ and so the statement holds.

\item $t= \trust u$. Then  $t[s/x]=\trust u[s/x]$.  By inductive hypothesis, the statement holds for $u[s/x]$. Since $t= \trust u:B$, we have $B=\bool \,\calpe$ and $u:(C)^m$. Hence we have $u[s/x]:C'$ where $ C' <: (C)^m$. But, by Definition \ref{def:subtype}, $C$ must be $(B')_l$ where $B'<:B$ and $l\geq m$. Hence, $\trust u[s/x]:\bool \,\calpe$ is still well defined and, by the reflexivity of $<:$, the statement holds.

%\item {\ehhi IFTE:  $t= \ifte t_1 u v$. Then  \[t[s/x]= \ifte t_1[s/x] u[s/x] v[s/x]\]  By inductive hypothesis, the statement holds for $t_1[s/x] ,u[s/x] ,v[s/x]$. Since $t=  \ifte t_1 u v:B$, we have that  $u:B_1,v:B_2$ where $B_i<:B$ and $t_1:\bool \,\calpe$. Hence we have $u[s/x]:B_1'$, $v[s/x]:B_2'$ and $t_1:C$ where $ B_i' <:B_i<: B$ and $C<:\bool \,\calpe$. But, by Definition \ref{def:subtype}, $C$ must be $\bool \,\calp'\thr'$ where, clearly,  $\calp'\thr' <:\calpe$ since the substitution cannot change the probability distribution indicated in the $\bool$ type. Hence, $\ifte t_1[s/x] ,u[s/x] ,v[s/x] :B$ is still well defined and the statement holds. }
\end{itemize}
\end{proof}

\begin{theorem}[Subject reduction]\label{thm:subject-reduction}
For any term $v:T$, if $v\red_{p}v'$, then there is a type $T'$ such that $T'<:T$ and $v':T'$.
\end{theorem}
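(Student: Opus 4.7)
The plan is to prove the result by case analysis on which of the six reduction patterns of Table \ref{tab:reductions} was used to derive $v\red_p v'$. In each case I would invert the typing derivation of $v$ to extract typings for its immediate subterms, then exhibit a typing derivation for $v'$ whose type is a subtype of $T$, using Lemma \ref{lem:substitution} whenever a substitution appears in the reductum and Definition \ref{def:subtype} to justify the subtyping conclusions. The probability label $p$ plays no role, since typing ignores probabilities.

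For $(\lam x.t)u \red_1 t[u/x]$, inverting the application rule (at $n=1$) yields $(\lam x.t):C\impl D$, hence $t:D$, together with $u:C'$ with $C'<:C$; Lemma \ref{lem:substitution} then gives $t[u/x]:D'$ for some $D'<:D=T$. For $\{p_1 t_1,\ldots,p_n t_n\} \red_{p_i} t_i$, inverting the choice rule gives each $t_j:A_j$ and $T=\cho_{j=1}^{n} A_j$; treating $A_i$ as the singleton sum $\cho_{k=1}^{1}A_i$, Definition \ref{def:subtype} yields $A_i<:T$. The case $\test_n s\red_1 \lan s,\ldots,s\ran$ is immediate: from $s:A$ we have $T=(A)^n$ and the tuple rule (with supertype $A$) assigns exactly $(A)^n$ to the reductum. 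For $\lan s_1,\ldots,s_n\ran\pi_j \red_1 s_j$, inverting the projection rule tells us $\lan s_1,\ldots,s_n\ran:(A)^n$ with $T=A$; the typing of $\lan s_1,\ldots,s_n\ran$ in turn forces $s_j:E_j$ with $E_j<:A$, so $s_j:T'$ with $T'<:T$. Finally, both $\trust$ reductions produce $\tert$ or $\terf$, which receive type $\bool\,\calpe = T$ by the respective axioms.

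The main obstacle I foresee is the choice-reduction case: a clean argument requires justifying $A_i <: \cho_{j=1}^n A_j$ for a possibly non-sum type $A_i$, which, given the shape of the clause in Definition \ref{def:subtype}, really turns on the convention that $\cho_{k=1}^1 A$ is identified with $A$. The same identification is implicitly needed in the $\beta$-case, since the application typing rule is the general probabilistic application rule at $n=1$, so the inverted types $C\impl D$ strictly speaking live inside a one-element sum. Once this identification is made explicit, the six cases are mechanical and no new idea is needed beyond careful inversion of the typing rules combined with the substitution lemma. A secondary subtlety is that if the reduction is later closed under evaluation contexts by the strategy of Section~\ref{sec:strategy-and-examples}, the argument should be preceded by a straightforward induction on term structure whose inductive step uses the covariance/contravariance clauses of Definition \ref{def:subtype} to transport the subtype relation through each constructor.
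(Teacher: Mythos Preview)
Your proposal is correct and follows essentially the same case analysis as the paper's own proof, invoking Lemma~\ref{lem:substitution} for the $\beta$-case and Definition~\ref{def:subtype} for the choice case. You are in fact more careful than the paper on two points: you make explicit the identification of $A$ with $\cho_{k=1}^1 A$ needed to conclude $A_i<:\cho_{j=1}^n A_j$, and in the projection case you correctly obtain $s_j:E_j$ with $E_j<:A$ rather than claiming $s_j:A$ outright as the paper does.
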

\begin{proof}We reason by cases on the reduction rule $v\red_{p}v'$.
\begin{itemize}
\item $(\lam x.t )u \red_1   t[u/x]$. Suppose that $x:A$ and $t:B$. The type of $\lam x.t $ is then $A\impl B$ and the type of  $(\lam x.t )u$ is $ B$. Since, by the typing rule for applications, we must have that the type of $u$ is a subtype of the type $A$ of $x$, by Lemma \ref{lem:substitution}, $t[u/x]:B'$ where $B'<:B$.  Hence the statement holds.

\item $\{p_1 t_1, \ldots , p_n t_n\}    \red_{p_i}   t_i  $ where $1\leq i \leq n $. Then  $v: \cho _{i=1}^{n}A_i$ and each $t_i$ must be of type $A_i$. Since, by Definition \ref{def:subtype}, $A_j<:\cho _{i=1}^{n}A_i$ holds for any $1\leq j\leq n $, also the statement holds.

%SUBJECT REDUCTION: Any $A_j$, for $1\leq j\leq n$, is a subtype of $\cho _{i=1}^{n}$. Indeed, $A_j=\cho _{k=1}^{1}B_k$ where $A_j=B_1$ and, by Definition \ref{def:subtype}, $\cho _{k=1}^{1}B_k$ is a subtype of $\cho _{i=1}^{n}$ since any immediate subformula of $\cho _{k=1}^{1}B_k$ is a subtype of one immediate subformula of $\cho _{i=1}^{n}$.

\item $\test _n s  \red_1  \lan s , \ldots , s \ran $. Then  $v: (A)^n$ and $v':(A)^n$, and, by the reflexivity of $<:$, the statement holds.

\item $\lan s_1 , \ldots , s_n \ran\pi _i  \red_1   s_i $. Then $v: A$ and $v':A$, and, by the reflexivity of $<:$, the statement holds.

\item $\trust \lan s_1 , \ldots ,s_n \ran \calpe   \red_1   \tert $ or $\trust \lan s_1 , \ldots , s_n \ran \calpe   \red_1   \terf $. Then  $v: \bool \,\calpe$ and $v': \bool \,\calpe$, and, by Definition \ref{def:subtype}, the statement holds.

%{\ehhi IFTE: \item $\ifte \tert \,t\,s   \red_1   t $ or $ \ifte \terf \,t\,s \red_1 s$. Then, $v: A$ and $v': A'$ where $A'<:A$, and hence the statement holds.}
\end{itemize}\end{proof}

%\begin{itemize}
%\item $(\lam x.t )u \red_1   t[u/x]$ for
%
%\item $\{p_1 t_1, \ldots , p_n t_n\}    \red_{p_i}   t_i  $ where $1\leq i \leq n $
%
%\item $\test _n s  \red_1  \lan s , \ldots , s \ran $
%
%
%\item $\lan s_1 , \ldots , s_n \ran\pi _i  \red_1   s_i $
%
%\item $\trust \lan s_1 , \ldots ,s_n \ran \calpe   \red_1   \tert $ or $\trust \lan s_1 , \ldots , s_n \ran \calpe   \red_1   \terf $
%
%
%
%\item $\ifte \tert \,t\,s   \red_1   t $ or $ \ifte \terf \,t\,s \red_1 s$
%
%\end{itemize}

%\begin{definition}[Normal form and terminating terms]
%We say that a typed term $t$ is in normal form if, and only if, there is no term $t'$ and number $p$ such that $t\red_p t'$. 
%
%We say that a term $t$ is terminating if, and only if, there exists a  term $t'$ in normal form such that $t\red^*_p t'$ in a finite number of reduction steps.
%\end{definition}

Finally, we show that each typed term of \lamt{} can be reduced, in a finite number of reduction steps, to a term that does not contain redexes. Let us first introduce the notion of logical complexity of a redex, which will be essential in proving the result.

\begin{definition}[Logical complexity of a redex]
The logical complexity of a redex is defined according to the relative reduction rule:  
\begin{itemize}
\item $(\lam x.t )u $ where $\lam x.t:A\impl B$. Then the logical complexity is the number of symbol occurrences in  $A\impl B$.

\item $\{p_1 t_1, \ldots , p_n t_n\} $ where $\{p_1 t_1, \ldots , p_n t_n\} :\cho _{i=1}^n A_i$. Then the logical complexity is the number of symbol occurrences in  $\cho _{i=1}^n A_i$. 

\item $\test _n s $ where $\test _n s : (A)^n$. Then the logical complexity is the number of symbol occurrences in $(A)^n$.

\item $\lan s_1 , \ldots , s_n \ran\pi _i $ where $\lan s_1 , \ldots , s_n \ran : (A)^n$. Then the logical complexity is the number of symbol occurrences in $(A)^n$.

\item $\trust \lan s_1 , \ldots ,s_n \ran \calpe $. The logical complexity of any redex due to an application of the $\trust $ operator is $0$.
\end{itemize}\end{definition}

%In order to do so,  we introduce first some simple terminology and prove a lemma which is going to be useful.

\begin{theorem}[Termination of the logical reduction strategy]\label{thm:logical-termination}
For any typed \lamt-term $t:T$, there is a term $t'$ such that $t$ reduces to $t'$ in a finite number of reduction steps and $t'$ does not contain any redexes.
\end{theorem}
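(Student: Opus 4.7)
The plan is to prove this by a Tait--Girard reducibility argument adapted to the probabilistic and tuple constructs of \lamt{}. For each type $A$, I would define a notion $t \reducible A$ (``$t$ is reducible at $A$'') by induction on $A$: for atomic types and for $\bool\,\calpe$, $t \reducible A$ iff $t$ weakly normalises; for $A \impl B$, $t \reducible (A \impl B)$ iff $ts \reducible B$ whenever $s \reducible A$; for $\cho_{i=1}^n A_i$, $t \reducible (\cho_i A_i)$ iff $t$ weakly normalises and every probabilistic one-step reduct $t_i$ satisfies $t_i \reducible A_i$; for $(A)^n$, $t \reducible (A)^n$ iff $t$ weakly normalises to a tuple $\lan s_1,\ldots,s_n\ran$ with each $s_j$ reducible at some $A' <: A$. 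Alongside, I would verify the three reducibility conditions --- \cruno{} (reducibility implies weak normalisation), \crdue{} (reducibility is closed under reduction), \crtre{} (a neutral term all of whose one-step reducts are reducible is itself reducible) --- together with a monotonicity property: $A <: B$ implies that $t \reducible A$ entails $t \reducible B$, proved by induction on $<:$.

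The heart of the argument is the adequacy lemma: for any $t:T$ with free variables $x_1:A_1, \ldots, x_k:A_k$ and any substitution $\sigma$ with $\sigma(x_i) \reducible A_i$ for each $i$, we have $t\sigma \reducible T$. The proof is by structural induction on $t$, invoking the Substitution Lemma (Lemma~\ref{lem:substitution}) and Subject Reduction (Theorem~\ref{thm:subject-reduction}) to keep types and reduction aligned. The standard cases (variable, abstraction, application) mirror the simply-typed argument; the probabilistic-choice case follows directly from the IH on each $t_i$ and the definition of $\reducible$ at a choice type; the experiment case uses that $\test_n(t\sigma) \red \lan t\sigma,\ldots,t\sigma\ran$ produces $n$ independent copies, each of which normalises by applying \crdue{} to the reducible $t\sigma$; the projection and $\trust$ cases exploit that their arguments normalise to tuples, so a single deterministic step completes the reduction. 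Since a free variable is neutral with no one-step reducts, \crtre{} makes it reducible at its type; the identity substitution is therefore reducible, so every well-typed term is reducible at its type and hence weakly normalises by \cruno{}.

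The main obstacle I anticipate is formulating the reducibility of choice and tuple types coherently in the presence of subtyping. Because Subject Reduction only preserves types modulo $<:$, the definitions must admit reducts whose type is a strict subtype of the declared component, and the monotonicity property must be compatible with this --- particularly delicate for $\cho$, where subtyping allows substantial rearrangement of the summands. A secondary subtlety is the duplication caused by $\test_n$: the $n$ copies of a term may reduce non-deterministically to different values, so the proof must treat each copy independently and recombine them into a weakly normalising tuple whose elements are again reducible. As an alternative, one could instead exploit the notion of logical complexity just introduced to design a measure-based termination argument --- for instance, a lexicographic order on pairs (maximum-complexity redex in $t$, number of such redexes) --- but ensuring strict decrease in the face of $\beta$-duplication and of the $\test_n/\pi_j$ interaction (where the $\pi_j$ redex that emerges after firing $\test_n$ has the same type complexity as the $\test_n$ redex just reduced) seems to require at least as much bookkeeping as the reducibility approach.
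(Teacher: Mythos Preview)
Your reducibility plan is a legitimate route to weak normalisation, but it is \emph{not} what the paper does. The paper's proof is precisely the measure-based alternative you sketch and then set aside in your final paragraph: it assigns to each term the lexicographic triple $(\mu,\nu,\tau)$, where $\mu$ is the maximal logical complexity of any redex, $\nu$ the number of redexes of complexity $\mu$, and $\tau$ the number of $\test$-redexes of maximal complexity, and shows that always reducing the \emph{rightmost} redex of maximal complexity strictly decreases this triple. The rightmost discipline neutralises $\beta$-duplication (any redex inside the argument $u$ of $(\lambda x.s)u$ has smaller complexity, else it would have been selected instead), and the extra component $\tau$ is exactly the minimal bookkeeping needed for the $\test_n/\pi_j$ clash you flagged: firing $\test_n s$ may create a new $\pi_j$-redex of the same complexity $(A)^n$, so $\mu$ and $\nu$ may stay put, but $\tau$ drops.

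Compared with your plan, the paper's argument is considerably more elementary---no reducibility predicates, no CR1--CR3, no monotonicity lemma for $<:$---and it directly exploits the ``logical complexity'' notion introduced just before the theorem. Your approach would be more modular and could plausibly be pushed to strong normalisation, but the cost is real: your clause for $\cho_i A_i$ (``every probabilistic one-step reduct $t_i$ satisfies $t_i \reducible A_i$'') is under-specified, since terms of choice type need not be syntactic choice terms, and the interaction of reducibility with the application rule---which in \lamt{} is \emph{syntactically} restricted to choice terms on the left---has to be worked out explicitly. So your assessment that the measure approach ``seems to require at least as much bookkeeping'' is off: the paper dispatches both of your anticipated obstacles with a single third coordinate in a lexicographic order, which is far lighter than the Tait--Girard machinery.
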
 
\begin{proof}
The complexity of a term is given by the triple $(\mu, \nu , \tau)$ where $\mu$ is the maximal logical complexity of the redexes in the term, $\nu$ is the number of redexes in the term with logical complexity $\mu$, and $\tau$ is the number of occurrences of $\test$ redexes with maximal logical complexity in the term. We order these triples lexicographically. We show that if we always reduce the rightmost redex of maximal logical complexity in the term, we decrease the following complexity measure at each reduction step. From this it follows that the reduction terminates. We reason by cases on the form of the redex we reduce.  

\begin{itemize}
\item $(\lam x.s)u \red_1   s[u/x]$. No maximal redex is duplicated since $(\lam x.s)u$ is rightmost among the redexes with maximal logical complexity. The complexity of the other new redexes that can be generated by this reduction is either determined by the type of $s[u/x]$ or by the type of $u$. In either case, the logical complexity is lower than that of the redex just reduced. Therefore, we either reduced the maximal logical complexity  $\mu$ of the redexes in the term or the  number $\nu$ of redexes  in the term with maximal logical complexity.

\item $\{p_1 t_1, \ldots , p_n t_n\}    \red_{p_i}   t_i  $. No redex is duplicated by this reduction.  The only  new redex that can be generated by this reduction has complexity determined by the type of $t_i$. The  logical complexity of these is lower than that of the redex just reduced. Therefore, we either reduced the maximal logical complexity  $\mu$ of the redexes in the term or the number $\nu$ of redexes  in the term with maximal logical complexity.

\item $\test _n s  \red_1  \lan s , \ldots , s \ran $. No maximal redex is duplicated since $\test _n s$ is rightmost among the redexes with maximal logical complexity. The only new redex that can be generated by this reduction is the projection of $\lan s , \ldots , s \ran$. If the reduction does not produce this new maximal projection redex, we reduced a maximal redex without producing or duplicating any maximal redex. In this case, we reduced the complexity of the term as usual. If, on the other hand, the reduction produces a new maximal redex -- that is, the projection of $\lan s , \ldots , s \ran$ -- then we reduced a maximal redex, produced a new redex with the same complexity, but also reduced the number of $\test $ redexes with maximal logical complexity in the term. Hence, the complexity of the term decreases in this case as well.

\item $\lan s_1 , \ldots , s_n \ran\pi _i  \red_1   s_i $. No redex is duplicated by this reduction.  The only  new redex that can be generated by this reduction has complexity determined by the type of $s_i$. The  logical complexity of this new redex is lower than that of the redex just reduced. Therefore, we either reduced the maximal logical complexity  $\mu$ of the redexes in the term or the number $\nu$ of redexes  in the term with maximal logical complexity.

\item $\trust \lan s_1 , \ldots ,s_n \ran \calpe   \red_1   \tert $ and  $\trust \lan s_1 , \ldots , s_n \ran \calpe   \red_1   \terf $. No redex is duplicated or generated by this reduction. Therefore, we either reduced the maximal logical complexity  $\mu$ of the redexes in the term or the number $\nu$ of redexes  in the term with maximal logical complexity.
\end{itemize}
\end{proof}

\section{Reduction strategy and examples}

\label{sec:strategy-and-examples}

For the reduction of \lamt-terms we employ a call-by-name-style strategy adapted to our language. This is an obvious choice if we consider that our main constraint is to give a meaningful behaviour to the experiment construct $\test _n$. Indeed, we need to avoid the case in which the argument of $\test _n$ is reduced to a value before the experiment is conducted. If we did not enforce this condition, we would admit series of reductions during which $\test _n t$ is reduced to $\test _n t'$, where $t'$ is a value obtained by reducing $t$, and only afterwards  $\test _n t'$ is reduced to  $\lan t', \ldots , t'\ran $. But a series of reductions of this kind corresponds to a computation during which we execute the program $t$ once and then consider the frequency distribution corresponding to $n$ identical copies of its output $t'$. What we would like to have, instead, is a computation in which we copy $t$ itself $n$ times and then evaluate autonomously the $n$ copies by $n$ different  evaluations, possibly obtaining different results in case $t$ is a probabilistic term. Consider, for instance, the very simple but meaningful example of an experiment consisting of ten throws of a fair coin -- technically, ten executions of a probabilistic term encoding a fair coin. The unwanted kind of computation is one in which we throw the coin once, obtain heads, for instance, and consider the frequency distribution given by assuming that the coin yielded heads ten times during the experiment. This is clearly not desirable. The desired kind of computation is, instead, one in which we throw the coin ten times and consider the frequency distribution given by the ten results obtained by ten different throws. 

In order to force computations involving $\test _n$ to match this second case -- and thus avoid the trivialisation of the computational behaviour of $\test_n $ -- we define a strategy that forces the redex $\test _n t$ to be reduced before $t$ is reduced to a value. This already has a call-by-name flavour since the argument of  $\test _n $ is not reduced before we reduce the application of the operator to the argument -- which would correspond to a call-by-value strategy for $\test _n$. We adopt a call-by-name strategy also for other operators to avoid the undesirable cases in which the arguments of $\test _n$ are reduced to a value just because some other construct reduced them before passing them to $\test _n$. For an example of this second problem, consider the term $(\lam x. \test_n x) t$. If we reduce the argument $t$ to a value before reducing the application of $(\lam x. \test _n x)$, we obtain $\test _n t'$ where $t'$ is a value. When we reduce $\test _n t'$ to $\lan t' , \ldots , t' \ran $ we obtain a value encoding the frequency distribution according to which $t$ gave $t'$ as output $n$ times. But this is the case only because we reduced $t$ before actually running the experiment. If we did not do that, $t$, which could be a probabilistic term, might have reduced in several different ways and the experiment might have yielded a non-trivial frequency distribution. By adopting a call-by-name strategy also for applications of $\lam$-abstraction and probabilistic choices, we avoid also this undesired behaviour.

%Here is the formal definition of our reduction strategy.
\begin{definition}[Call-by-name  (CBN) reduction strategy]\label{def:cbn}
The reduction relation $\red_p$ is defined by extending the 0-premises reduction rules in Table \ref{tab:reductions} by the following reduction rules:
\begin{itemize}
%\item $\vcenter{\infer{\lam x.t\red_p \lam x.t'}{t\red_p t'}}$ 
\item $\vcenter{\infer{ts\red_p t's}{t\red_p t'}}$ 
\item $\vcenter{\infer{\lan \ldots , t, \ldots \ran \red_p \lan \ldots , t',  \ldots \ran}{t\red_p t'}}$ 

where $t$ is the leftmost element of $\lan \ldots , t, \ldots \ran$ which is not a value\footnote{Notice that, for the sake of simplicity, we also adopt here the convention that, inside tuples, we always reduce the leftmost available redex. Since the reductions of different elements of a tuple do not interact in any way, by doing this we simply eliminate irrelevant instances of non-determinism in the reduction of terms.}

\item $\vcenter{\infer{\trust \,t\,\calpe \;\red_p\; \trust\, t'\,\calpe}{t\red_p t'}}$

\item $\vcenter{\infer{t\pi_j \red_p  t'\pi_j}{t\red_p t'}}$

where $t$ is not of the form $ \lan t_1 , \ldots , t_m \ran$

%\item {\ehhi IFTE:  $\vcenter{\infer{\ifte tuv \red_p \ifte t'uv}{t\red_p t'}}$}
\end{itemize}
\end{definition}

Finally, we prove that the computation of a typed closed term according to our call-by-name strategy never stops without yielding a value.
\begin{theorem}[Progress]\label{thm:progress}
For any closed typed \lamt-term $t:T$, either  there exists a $t'$ such that $t\red_p t'$ according to Definition \ref{def:cbn}, or $t$ is a value.
\end{theorem}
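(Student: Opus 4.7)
The plan is to proceed by structural induction on $t$, or equivalently by induction on its typing derivation. The argument is classical in spirit: for each syntactic form, either $t$ already matches the shape of a value from Definition \ref{def:value}, or it contains a redex that can be fired, possibly after traversing a few levels of context using the contextual rules of Definition \ref{def:cbn}. The key auxiliary fact I need is a canonical forms lemma: a closed value of type $A\impl B$ must be a $\lam$-abstraction, and a closed value of a tuple type $(A)^n$ must itself be a tuple $\lan s_1,\ldots,s_m\ran$ of values. This follows by inspecting the value grammar alongside the subtyping relation of Definition \ref{def:subtype}, which never crosses between different top-level type constructors, so the shape of the type constrains the shape of any value of that type.

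The cases break down as follows. The variable case is vacuous since $t$ is closed. The atoms $\tert$, $\terf$ and any $\lam$-abstraction $\lam x.s$ are immediately values. A choice term $\{p_1t_1,\ldots,p_nt_n\}$ and an experiment $\test_n s$ are themselves redexes and reduce in one step. For a tuple $\lan s_1,\ldots,s_n\ran$, either every $s_i$ is already a value, in which case the tuple is a value, or the leftmost non-value $s_i$ reduces by the inductive hypothesis and we lift the step through the contextual tuple rule. For an application $tu$, the inductive hypothesis applied to $t$ either reduces $t$ (which lifts via the application context rule) or exhibits $t$ as a value; in the latter case, canonical forms make $t$ a $\lam$-abstraction and the $\beta$-rule applies. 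For a projection $t\pi_j$, if $t$ is already a tuple the projection redex fires directly, and otherwise the inductive hypothesis on $t$ combined with the projection context rule (whose side condition ``$t$ is not a tuple'' is then met) produces a step. For $\trust t\calpe$, the inductive hypothesis either reduces $t$ via the $\trust$ context rule, or exhibits $t$ as a value of tuple type; by canonical forms $t$ is then $\lan s_1,\ldots,s_n\ran$ with all $s_i$ values, so exactly one of the two $\trust$ reduction rules fires, according to the total variation distance between the induced frequency distribution and $\calp$.

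The main subtlety I expect is in the application case, specifically arguing that a closed value can never have a choice type $\cho_{i=1}^n A_i$; without this, the ``$t$ is a value'' subcase might fail to deliver a $\lam$-abstraction. The point is that the value grammar offers no inhabitant for $\cho$-types, and subtyping preserves top-level shape, so any choice-typed $t$ is necessarily a non-value and must reduce by the probabilistic choice rule, possibly lifted through the application context. A minor bookkeeping point is the $\trust$ subcase: whenever every $s_i$ in $\lan s_1,\ldots,s_n\ran$ is a value, the induced frequency distribution $\calp'$ is well defined, so exactly one of the two $\trust$ rules applies; this is guaranteed by the typing rule for tuples together with the canonical forms lemma.
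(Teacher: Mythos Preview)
Your proposal is correct and follows essentially the same route as the paper: structural induction on $t$, with the canonical-forms observation used (implicitly in the paper, explicitly by you) to conclude that a closed value in function position must be a $\lam$-abstraction and a closed value of tuple type must be a tuple of values. The only differences are cosmetic: you make the canonical-forms lemma explicit and case-split the projection on ``is $t$ a tuple'' rather than ``is $t$ a value'', and your closing remark that a choice-typed head ``must reduce by the probabilistic choice rule'' is a slight overstatement (it reduces by the inductive hypothesis, whatever form that takes), but none of this affects the argument.
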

\begin{proof}The proof is by induction on the structure of $t$. 

\begin{itemize}
\item $t=y$. This is impossible since $t$ is closed.

\item $t=\tert$ or $t=\terf$. In both these cases, $t$ is a value and the statement holds.  

\item $t=\lam y.u $. Then $t$ is a value and the statement holds.

\item $t= uv$. Since $uv$ is closed, so is $u$. Hence, by inductive hypothesis, either $u$ is a value or there exists a $u'$ for which $u\red_p u'$. If $u$ is a value, it must be of the form $\lam x. u_1$ because otherwise could not be applied to $v$. Hence $uv =  (\lam x. u_1)v\red_1 u_1[v/x]$ and the statement holds. If $u\red_p u'$, by Definition \ref{def:cbn}, we have that $uv\red_p u'v$ and the statement holds.

\item $t= \{p_1 t_1 , \ldots , p_nt_n\}$. Then $t\red_{p_i} t_i$ and the statement holds.

\item $t= \test _n u $. Since $t $ is closed, also $u$ must be closed, and thus $t=\test _n u\red_1 \lan u, \ldots , u\ran $. Therefore, the statement holds.

\item $t= \lan  t_1 , \ldots , t_n \ran  $. Since $t$ is closed, each $t_i$ for $1\leq i\leq n$ is a closed term. Hence, by inductive hypothesis, either each $t_i$ is a value, or there is a leftmost $t_i$ such that there exists $t'_i$ for which $t_i\red _pt'_i$. If each $t_i$ is a value, then $t= \lan  t_1 , \ldots , t_n \ran $ is a value and the statement holds. If there is a leftmost $t_i$ such that $t_i\red _pt'_i$, then, by Definition \ref{def:cbn}, we have that $t= \lan  t_1 , \ldots ,t_i , \ldots , t_n \ran  \red_p  \lan  t_1 , \ldots ,t'_i , \ldots , t_n \ran  $ and the statement holds.

\item $t= u\pi_j$. Since $u\pi_j$ is closed, so is $u$. Hence, by inductive hypothesis, either $u$ is a value, or there exists a $u'$ for which $u\red_p u'$. If $u$ is a value, it must be of the form $\lan  u_1 , \ldots , u_m \ran$ because otherwise $\pi_j$ could not be applied to it. Hence $t=\lan  u_1 , \ldots , u_m \ran\pi_j\red _1 u_n$ and the statement holds. If, on the other hand, there exists a $u'$ for which $u\red_p u'$, then we have two more cases. Either $u$ is  of the form $\lan  u_1 , \ldots , u_m \ran$ or it is not. If it is, we have $t=\lan  u_1 , \ldots , u_m \ran\pi_j\red _1 u_n$ and the statement holds. If it is not, by Definition \ref{def:cbn}, we have $t= u\pi_j \red_p \trust u'\pi_j$ and the statement holds.

\item $t= \trust u\calpe$.  Since $ \trust  u\calpe$ is closed, so is $u$. Hence, by inductive hypothesis, either $u$ is a value, or there exists a $u'$ for which $u\red_p u'$. If $u$ is a value, it must be of the form $\lan  u_1 , \ldots , u_n \ran$ because otherwise $\trust$ could not be applied to it. Hence either $\lan  u_1 , \ldots , u_n \ran\encodes \calp'$ and $\mid \calp - \calp'\mid  \leq \thr $, which implies that  $\trust u\calpe\red _1 \tert$, or  $\lan  u_1 , \ldots , u_n \ran \encodes \calp' $ and $\mid \calp - \calp'\mid  > \thr $, which implies that $\trust u \calpe\red _1 \terf$. In either case, the statement holds. If, on the other hand, there exists a $u'$ for which $u\red_p u'$, then, by Definition \ref{def:cbn}, we have that $t=\trust u \red_p \trust u'$ and the statement holds.

%{\ehhi IFTE: \item $t= \ifte t_1 u v$. Since $\ifte t_1 u v $ is closed, so is $t_1$. Hence, by inductive hypothesis, either $t_1$ is a value, or there exists a $t_1'$ for which $t_1\red_p t_1'$. If $t_1$ is a value, it must be either $\tert$ or $\terf$ and we have $t= \ifte t_1 u v \red_1 u$ and $t= \ifte t_1 u v \red_1 v$, respectively, and the statement holds. If, on the other hand, there exists a $t_1'$ for which $t_1\red_p t_1'$, then, by Definition \ref{def:cbn}, we have that $t=\ifte t_1 u v \red_p \trust \ifte t_1' u v$ and the statement holds.}
\end{itemize}\end{proof}

We now prove that also the reduction of a term according to the  call-by-name  strategy terminates. The idea of the proof is to associate all reduction steps in the CBN reduction of $t$ to a reduction step in a logical reduction of $t$ in such a way that only a finite number of reduction steps of the CBN reduction are associated to each step of the logical reduction---in doing this, we also suppose that the two reductions agree on all probabilistic choices. The possibility of associating reductions in this way enables us to conclude that the number of reduction steps in the CBN reduction is infinite only if the number of reduction step in the logical reduction is infinite.

In order to prove the theorem, we begin by defining a procedure for associating redexes occurring in different terms $t_\text{CBN}'$ and $t_\text{Logic}'$ obtained by reducing the same term $t$ by the CBN, respectively, logical reduction strategy. The association is done by a procedure for labelling redexes and variables occurring in $t_\text{CBN}'$ and $t_\text{Logic}'$. This labelling will be used to track the evolution of a redex even if its body slightly changes due to other redexes being reduced. Intuitively, we label all redexes in the term $t$ by applying a label to the outermost symbol $\lam$, $\pi_j$, $\{$, $\test$, or $\trust $ of the redex. 
%We do this in such a way that, even if the body of the redex is modified by a reduction of another redex, we still keep the same label on the modified version of the old redex. 
Only if a redex is reduced or erased by a reduction, the label disappears from the term. For instance, if $r$ is a redex, $x$ does not occur in $t$, and we execute the following reduction: $(\lam x. t)r \red_1 t$, then the label of $r$ and the label of the redex $(\lam x. t)r$ itself will disappear from the term. If a redex is copied by the reduction of another redex, we will have more redexes in the obtained term with the same label. For instance,  if  $r$ is a redex and we execute the following reduction: $(\lam x. \lan x,xt\ran)r \red_1 \lan r,rt\ran$, then both copies of $r$ in the obtained  term $\lan r,rt\ran$ will have the same label, which is the same label that $r$ had in the term $(\lam x. \lan x,xt\ran)r$. Finally, if new redexes are generated by a reduction -- as in $(\lam x. xt)(\lam y. s) \red_1 (\lam y. s)t$ -- we label them by new labels in such a way that we do not distinguish between redexes that can be considered as duplicates of the same redex occurrence. In order to do this we also need to proceed similarly with variable occurrences in order to track their duplications.

%{\ehhi \[(\lam x.xx)^1(\lam x.xx)\red(\lam x.xx)^2(\lam x.xx)\red(\lam x.xx)^3(\lam x.xx)\red\ldots\]
%
%Parenthesised labels indicate potential future redexes:
%
%\[(\lam x.(xx)x)^1(\lam x.(xx)x)\red ((\lam x.(xx)x)^2(\lam x.(xx)x))^{(3)}(\lam x.(xx)x)\]
%\[\red (((\lam x.(xx)x)^4(\lam x.(xx)x))^{(5)}(\lam x.(xx)x))^{(3)}(\lam x.(xx)x)\red\ldots\]
%
%Leftmost:
%{\tiny\[(\lam x.x(xx))^1(\lam x.x(xx))\red (\lam x.x(xx))^2((\lam x.x(xx))^3(\lam x.x(xx)))\]
%\[\red ((\lam x.x(xx))^3(\lam x.x(xx)))^{(4)}(((\lam x.x(xx))^3(\lam x.x(xx)))^{(5)}((\lam x.x(xx))^3(\lam x.x(xx))))\]
%\[ \red (   (\lam x.x(xx))^6((\lam x.x(xx))^7(\lam x.x(xx)))  )^{(4)}(((\lam x.x(xx))^3(\lam x.x(xx)))^{(5)}((\lam x.x(xx))^3(\lam x.x(xx))))\]
%\[ \red ( ((\lam x.x(xx))^7(\lam x.x(xx)))^{(8)}(((\lam x.x(xx))^7(\lam x.x(xx)))^{(9)}((\lam x.x(xx))^7(\lam x.x(xx))))  )^{(4)}(((\lam x.x(xx))^3(\lam x.x(xx)))^{(5)}((\lam x.x(xx))^3(\lam x.x(xx))))\ldots \]
%Here as well we can see that old labels are not duplicated indefinitely. 
%}}

\begin{definition}[Labelled, twin reduction of a term]
\label{def:labelling}
  
Given a term $t$, we label all its redex occurrences by a distinct label $\alpha, \beta, \gamma\dots $ We then duplicate the resulting labelled term and name $t_\text{CBN}^0$ one copy and $t_\text{Logic}^0$  the other copy.

For any $n$, in order to obtain the term $t_\text{CBN}^{n+1}$, we reduce one redex in $t_\text{CBN}^n$ by following the CBN strategy presented in Definition \ref{def:cbn}. As for $t_\text{Logic}^{n+1}$ , suppose that the redex of $t_\text{CBN}^n$ that we have just reduced has label $\alpha$. We have two cases:
\begin{itemize}
\item If the label $\alpha $ occurs in $t_\text{Logic}^n$, in order to obtain the term $t_\text{Logic}^{n+1}$, we reduce possibly several redexes in $t_\text{Logic}^n$ by following the logical reduction strategy presented in Theorem \ref{thm:logical-termination} until we have reduced all the redexes labelled $\alpha$. 

\item If no redex with label $\alpha$ occurs in $t_\text{Logic}^n$, we do not reduce anything and take $t_\text{Logic}^{n+1}=t_\text{Logic}^n$.
\end{itemize}

The labelling of new redexes proceeds as follows.
\begin{itemize}
\item If the label $\alpha $ occurs in $t_\text{Logic}^n$, then in  $t_\text{CBN}^{n+1}$ we have that: 
\begin{itemize}
\item {\bf Body-generated redexes.} For any label $\alpha $, all redexes generated by the reduction of redexes $u\pi_j$, $\{\ldots \}$, or $\test u$ with label $\alpha $ are labelled by a unique new label $\alpha_0$. The label  $\alpha _0$ is also used for all redexes generated by the reducti $u[s/x]$ of all redexes $(\lam x . u )s$ labelled $\alpha $. 

\item {\bf Argument-generated redexes.} The redexes generated by the term $s$ which is being substituted by the substitution $[s/x]$ produced by redexes $(\lam x . u )s$ labelled $\alpha $ are labelled by new different labels $\alpha_1, \ldots , \alpha _n $ -- all different both from $\alpha$ and from $\alpha _0$ -- for each different label of an occurrence of $x$ in $u$. In other words, two redexes generated by the term $s$ being substituted for the variable $x$ will have the same label if, and only if, the two occurrences of $x$ that we replace by $s$ have the same label.
\end{itemize}

After having labelled $t_\text{CBN}^{n+1}$, we label the new redexes obtained in $t_\text{Logic}^{n+1}$ by copying the labelling used in $t_\text{CBN}^{n+1}$ as follows:
\begin{itemize}
\item {\bf Body-generated redexes.} If a body-generated redex in $t_\text{CBN}^{n+1}$ is labelled $\beta$, then also the corresponding body-generated redex in $t_\text{Logic}^{n+1}$ will be labelled $\beta$.

\item {\bf Argument-generated redexes.} 
%For the argument-generated redexes in $t_\text{Logic}^{i+1}$, we use the same label used for the argument-generated redexes in $t_\text{CBN}^{i+1}$ that have been generated by replacing a variable with the same label. In other words, 
If an argument-generated redex in $t_\text{CBN}^{i+1}$ is obtained by replacing a variable labelled $\gamma$ and is labelled $\beta$, then also any argument-generated redex in $t_\text{Logic}^{i+1}$ obtained  by replacing a variable labelled $\gamma$ will be labelled $\beta$.
\end{itemize}

\item If the label $\alpha $ does not occur in $t_\text{Logic}^n$ because it has already been reduced at an earlier stage $m<n$ of the twin, labelled reduction, then we label the new redexes generated in $t_\text{CBN}^{n+1}$ by copying the labelling employed in $t_\text{Logic}^m$ as follows:
\begin{itemize}
\item {\bf Body-generated redexes.} If a body-generated redex in $t_\text{Logic}^m$ is labelled $\beta$, then also the corresponding body-generated redex in $t_\text{CBN}^{n+1}$ will be labelled $\beta$.

\item {\bf Argument-generated redexes.} 
%For the argument-generated redexes in $t_\text{Logic}^{i+1}$, we use the same label used for the argument-generated redexes in $t_\text{CBN}^{i+1}$ that have been generated by replacing a variable with the same label. In other words, 
If an argument-generated redex in $t_\text{Logic}^m$ is obtained by replacing a variable labelled $\gamma$ and is labelled $\beta$, then also any argument-generated redex in $t_\text{CBN}^{n+1}$ obtained  by replacing a variable labelled $\gamma$ will be labelled $\beta$.
\end{itemize}

\item If the label $\alpha $ does not occur in $t_\text{Logic}^n$ but it has not been reduced at an earlier stage $m<n$, then we introduce completely new labels for the new redexes, but we still use the same label for body-generated redexes obtained by redexes with the same label, and for argument-generated redexes obtained by replacing variables with the same label.
\end{itemize}
\end{definition}

The main normalisation argument relies on three rather intuitive statements. We formally prove them before moving to the proof of the main theorem. 
\begin{lemma}\label{lem:finite-duplication}
No (finite or infinite) \emph{labelled, twin reduction} can duplicate a label an infinite number of times.
\end{lemma}
\begin{proof} Labels, according to Definition \ref{def:labelling}, are only duplicated when a reduction duplicate terms. Newly generated redexes in a reduction sequence $t^0\red_{p_0} \dots \red_{p_{n}}t^{n+1}$ are never assigned a label which was previously assigned to some other redex. Hence, if two redexes have the same label, then they have been generated together by the reduction of another redex. This implies that each label can only be duplicated only a finite amount of times even if our sequence of reductions is not finite.
\end{proof}

\begin{lemma}\label{lem:no-reduce-erase}
If during a \emph{labelled, twin reduction} we reduce a redex with label $\alpha $ in a term $t_\text{CBN}^n$, then it is impossible that all redexes with label $\alpha $ have been erased during the reduction of $t_\text{Logic}^0$ into $t_\text{Logic}^n$
\end{lemma}
\begin{proof} 
It is enough to show that if a redex $r$ is erased by the logical strategy, then also the CBN strategy erases it before reducing it. But this is true. Indeed, if the logical strategy erases a redex $r$, it is because it occurs inside another redex $r'$ of the form $(\lam x .u )v $ where $x$ does not occur in $u$, $\lan \ldots \ran \pi_j$, $\{\ldots \} $ or $\trust u$. But if this is the case, since the CBN strategy never reduces  terms inside a choice operator $\{\;\}$, arguments of applications, arguments of projections or arguments of the $\trust $ operator, we are guaranteed that the CBN strategy will erase the redex $r$ before reducing it.
\end{proof}

\begin{lemma}\label{lem:prev-reduce}
If during a \emph{labelled, twin reduction} no redex in $t_\text{Logic}^n$ has the label $\alpha$ of the redex in $t_\text{CBN}^n$ that we are reducing, then some redexes with label $\alpha $ occurred during the reduction of $t_\text{Logic}^0$ into $t_\text{Logic}^n$ and these redexes have been reduced -- as opposed to simply erased.
\end{lemma}
\begin{proof} 
Lemma \ref{lem:no-reduce-erase} guarantees that, if a redex is reduced by the CBN strategy, the logical strategy will not erase it but reduce it. Hence, in order to prove the statement, it is enough to show that, according to Definition \ref{def:labelling}, we never assign to a redex that will be reduced during the CBN reduction a label $\alpha $ that cannot be assigned to any redex occurring during the logical reduction. We prove that this holds by induction on the number $n$ of the stage of the computation:
\begin{itemize}
\item Base case. We first consider body-generated redexes. It is enough to show that if the first redex reduced during the CBN reduction generates a new redex, also the logical reduction of the redexes with the same label does. But this follows from Lemma \ref{lem:no-reduce-erase} and from the fact that, even though the logical reduction could reduce other redexes before the one which is being reduced in the CBN reduction, once the conditions under which a reductum of a redex is a redex are met, they cannot be made false by reducing other redexes. As for  argument-generated redexes, we must also consider the duplication of the argument of the reduction and the behaviour of the labelled variables  that are replaced by the argument during the reduction. The only new potential problem that arises in this case is that a variable label that occurs in $t_\text{CBN}^0$ is erased from $t_\text{Logic}^0$ before the last redex that yields $t_\text{Logic}^1$ is reduced. But this implies that the new redex occurring in $t_\text{CBN}^1$ will be erased later on in the CBN reduction. And since, as we argued in proving $(2)$, the CBN strategy never reduces a redex occurring in a term that can be erased, the statement holds.

\item Inductive step. Suppose that in all reduction stages preceding the stage $i+1$ each redex label introduced by the CBN strategy has also been assigned to some redex in the logical reduction. Then the same argument used for the base case applies to the stage $n$. Indeed, by inductive hypothesis, a term $t_\text{Logic}^m$, for some $m\leq n$, contains a redex $r'$ labelled just like the redex $r$ that we are reducing in $t_\text{CBN}^n$ to obtain $t_\text{CBN}^{n+1}$. The reduction of $r'$, as we argued in the base case, will eventually produce a redex that we can label by the same label used for $r$. Also for the argument-generated redexes, the considerations presented for the base case hold for the inductive step as well. Indeed, if the logical reduction erases all variables with the same label as the variables that we replace while reducing the redex that yields $t_\text{CBN}^{n+1}$ form $t_\text{CBN}^n$, then, by $(2)$, the argument-generated redex occurring in $t_\text{CBN}^{n+1}$ will be erased later on during the CBN reduction and never reduced. But since, as we argued in proving Lemma \ref{lem:no-reduce-erase}, the CBN strategy never reduces a redex occurring in a term that can be erased, the statement holds also in this case.
\end{itemize}
\end{proof}

We can finally prove the main normalisation theorem for our CBN strategy.

\begin{theorem}\label{thm:cbn-termination}For any typed \lamt-term $t:T$, the call-by-name strategy in Definition \ref{def:cbn} applied to $t$ never produces an infinite sequence of reductions.
\end{theorem}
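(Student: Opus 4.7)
The plan is to prove the stronger statement of strong normalisation for the full reduction relation, from which the termination of the CBN strategy follows immediately. The measure of Thm.~\ref{thm:logical-termination} does not directly adapt to CBN: that proof relies essentially on always reducing the \emph{rightmost} redex of \emph{maximal} logical complexity, whereas CBN reduces the outermost one, which may have low complexity while harbouring, inside its argument, higher-complexity redexes that would then be duplicated by $\beta$-reduction or by the reduction $\test_n s \red \lan s,\ldots,s\ran$, so the triple $(\mu,\nu,\tau)$ can actually increase along a CBN step.

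I would therefore deploy the Tait--Girard reducibility-candidates method. For each type $T$ define, by induction on $T$, a set $\mathrm{Red}_T$ of reducible terms: for atomic types and $\bool\,\calpe$, the strongly normalising terms; for $A\impl B$, those $t$ such that $tu \in \mathrm{Red}_B$ for every $u \in \mathrm{Red}_A$; for $\cho_{i=1}^{n} A_i$, strongly normalising terms whose one-step head reducts all lie in the appropriate $\mathrm{Red}_{A_i}$; for $(A)^n$, those $t$ such that $t\pi_j \in \mathrm{Red}_A$ for every $j$. Then verify, by mutual induction on types, the three reducibility properties $\cruno$ (reducibility entails strong normalisation), $\crdue$ (closure under reduction), and $\crtre$ (a neutral term all of whose one-step reducts are in $\mathrm{Red}_T$ is itself in $\mathrm{Red}_T$). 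Variables, being neutral with no reducts, are then reducible via $\crtre$. The adequacy lemma --- every well-typed term becomes reducible when its free variables are instantiated by reducible terms --- is proved by induction on the typing derivation; the new cases ($\test_n$, $\lan\ldots\ran$, $\pi_j$, $\trust$, and probabilistic choice) are handled by combining $\crtre$ with the reducibility of the possible reducts, which the inductive hypothesis supplies, and by a standard saturation argument for the $\lambda$-abstraction case via the substitution lemma (Lem.~\ref{lem:substitution}).

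The main obstacle will be the interaction with subtyping. Since Thm.~\ref{thm:subject-reduction} only guarantees preservation of type up to $<:$, I will need an auxiliary monotonicity lemma asserting that $A' <: A$ implies $\mathrm{Red}_{A'} \subseteq \mathrm{Red}_A$, so that a reductum of a probabilistic choice or of an application --- whose type may strictly drop to a subtype --- is still reducible at the original type. The duplicating reduction $\test_n s \red \lan s,\ldots,s\ran$ is tamed by the tuple clause of $\mathrm{Red}$, which demands only the reducibility of each projection, together with the leftmost-non-value clause of Definition~\ref{def:cbn}, which processes the $n$ duplicated copies sequentially rather than concurrently. Once adequacy is established, applying it to the identity reducible substitution and invoking $\cruno$ yields strong normalisation for every typed term, which in particular forbids any infinite CBN reduction sequence.
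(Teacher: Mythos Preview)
Your plan is sound and proves a stronger fact (strong normalisation of the full compatible closure of $\red$), from which CBN termination follows. The paper takes a genuinely different route: rather than building reducibility candidates, it shows directly that any infinite CBN reduction can be \emph{simulated} by an infinite reduction along the logical strategy of Theorem~\ref{thm:logical-termination}, contradicting that theorem. Concretely, the paper labels redexes so that each CBN step on a redex labelled $\alpha$ is matched by the logical strategy eventually reducing all $\alpha$-labelled copies; it then argues that (i) labels are duplicated only finitely often, (ii) a redex reduced by CBN is never already erased in the logical simulation, and (iii) every label introduced on the CBN side also appears on the logical side. This yields a finite-to-one map from CBN steps to logical steps.

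The two approaches buy different things. The paper's argument is more economical in that it reuses Theorem~\ref{thm:logical-termination} as a black box and avoids defining $\mathrm{Red}_T$ for the somewhat awkward $\cho$- and $(\cdot)^n$-types; the price is a delicate bookkeeping argument about labels, erasure, and body- versus argument-generated redexes. Your approach is the standard Tait--Girard machinery and yields full SN, but you will have to be precise where your sketch is currently vague: the clause ``SN terms whose one-step head reducts lie in the appropriate $\mathrm{Red}_{A_i}$'' for $\cho_{i=1}^n A_i$ needs a definition that also covers non-choice terms of that type (applications typed via the $\cho$-application rule) and that makes the monotonicity lemma $A'<:A \Rightarrow \mathrm{Red}_{A'}\subseteq \mathrm{Red}_A$ go through for the $\cho$-clause of Definition~\ref{def:subtype}. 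Also, your remark that the leftmost-non-value clause of Definition~\ref{def:cbn} ``processes the $n$ duplicated copies sequentially'' is irrelevant to an SN proof for the full relation; the duplication is tamed purely by CR1 (each copy of $s$ is SN) and the elimination-based definition of $\mathrm{Red}_{(A)^n}$.
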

\begin{proof} In order to prove that the call-by-name (CBN) reduction defined in Definition \ref{def:cbn} of any term $t$ terminates, we use an indirect argument. We show in particular that, if there is an infinite CBN reduction of $t:T$, then there is also an infinite reduction of $t$ according to the logical reduction strategy. But since this would contradict Theorem \ref{thm:logical-termination}, we conclude that there is no infinite CBN reduction of $t$, for any term $t:T$.

We reason indirectly. Suppose indeed that the CBN strategy from Definition \ref{def:cbn} applied to the a term $t:T$ produces an infinite sequence of reductions $\sigma_\text{CBN}$. Consider then the labelled, twin reduction $\sigma_\text{Logic}$ of $t:T$ that follows the logical strategy introduced in Theorem \ref{thm:logical-termination} and in which all reductions of a probabilistic choice operator match the corresponding ones occurring during $\sigma_\text{CBN}$ -- that is, we consider the sequence $\sigma_\text{Logic}$ during which a redex $\{p_1t_1 , \ldots p_n t_n \}$ with label $\alpha$ reduces to $t_i$  if, and only if, the corresponding redex  $\{p_1t'_1 , \ldots p_n t'_n \}$ with label $\alpha$ reduces to $t'_i$ during $\sigma_\text{CBN}$. We can always suppose that the choice operators in $\sigma_\text{Logic}$ behave just as in $\sigma_\text{CBN}$ since, for our argument, it is enough that there exists at least one sequence of reductions $\sigma_\text{Logic}$ with the suitable properties. Now, by Lemmata \ref{lem:no-reduce-erase} and \ref{lem:prev-reduce}, each redex reduced in $\sigma_\text{CBN}$ is associated to at least one redex reduced in $\sigma_\text{Logic}$. Moreover, by Lemma \ref{lem:finite-duplication}, only a finite number of redexes reduced during $\sigma_\text{CBN}$ is associated to the same redex reduced during $\sigma_\text{Logic}$. Therefore, since $\sigma_\text{CBN}$ is supposed to be infinite, also  $\sigma_\text{Logic}$ must be infinite. But this contradicts the fact that, according to  Theorem \ref{thm:logical-termination}, $\sigma_\text{Logic}$ must be finite. In conclusion then, the CBN strategy from Definition \ref{def:cbn} applied to a term $t:T$ cannot produce an infinite sequence of reductions.

\end{proof}

\subsection{Examples of computation}\label{sec:computational-examples}

\begin{example}
Consider our language extended with the types $\mathbb{1},\mathbb{2}, \mathbb{3}, \mathbb{4}, \mathbb{5}, \mathbb{6}$ and the terms \[1:\mathbb{1}\qquad 2:\mathbb{2}\qquad 3:\mathbb{3}\qquad 4:\mathbb{4}\qquad 5:\mathbb{5}\qquad 6:\mathbb{6}\]

Construct a probabilistic term \[\{\frac{1}{6} 1  , \frac{1}{6} 2  ,\frac{1}{6}3  , \frac{1}{6}4  ,\frac{1}{6}5  , \frac{1}{6}6  \}: \mathbb{1}\oplus\mathbb{2}\oplus \mathbb{3}\oplus \mathbb{4}\oplus \mathbb{5}\oplus \mathbb{6}\] implementing a fair dice. The experiment consisting in throwing  the dice four times is then implemented by the following term:
\[\test _4\{\frac{1}{6} 1  , \frac{1}{6} 2  ,\frac{1}{6}3  , \frac{1}{6}4  ,\frac{1}{6}5  , \frac{1}{6}6  \}:(\mathbb{1}\oplus\mathbb{2}\oplus \mathbb{3}\oplus \mathbb{4}\oplus \mathbb{5}\oplus \mathbb{6})^4\]  
According to our CBN strategy, the term reduces to the following one:
{\tiny \[\lan \{\frac{1}{6} 1  , \frac{1}{6} 2  ,\frac{1}{6}3  , \frac{1}{6}4  ,\frac{1}{6}5  , \frac{1}{6}6  \},  \{\frac{1}{6} 1  , \frac{1}{6} 2  ,\frac{1}{6}3  , \frac{1}{6}4  ,\frac{1}{6}5  , \frac{1}{6}6  \}, \{\frac{1}{6} 1  , \frac{1}{6} 2  ,\frac{1}{6}3  , \frac{1}{6}4  ,\frac{1}{6}5  , \frac{1}{6}6  \}, \{\frac{1}{6} 1  , \frac{1}{6} 2  ,\frac{1}{6}3  , \frac{1}{6}4  ,\frac{1}{6}5  , \frac{1}{6}6  \}\ran\]}still of type $(\mathbb{1}\oplus\mathbb{2}\oplus \mathbb{3}\oplus \mathbb{4}\oplus \mathbb{5}\oplus \mathbb{6})^4$ and then, depending on what probabilistic reductions actually get triggered, to a term of the following form with probability $\frac{1}{1296}$:
\[\lan 2, 5, 6, 3\ran:(\mathbb{1}\oplus\mathbb{2}\oplus \mathbb{3}\oplus \mathbb{4}\oplus \mathbb{5}\oplus \mathbb{6})^4\]where the numbers inside the tuple $\lan \;\; \ran$ actually resulting from the reduction may vary.

In the calculus, though, we also have the operator $\trust$ that enables us to actually check whether the behaviour of a term during an experiment leads us to trust the term to compute certain values with a certain probability. Indeed, we can construct the term
\[\trust \test _4\{\frac{1}{6} 1  , \frac{1}{6} 2  ,\frac{1}{6}3  , \frac{1}{6}4  ,\frac{1}{6}5  , \frac{1}{6}6  \}:\bool(\frac{1}{6}\mathbb{1}, \frac{1}{6}\mathbb{2}, \frac{1}{6} \mathbb{3}, \frac{1}{6} \mathbb{4}, \frac{1}{6} \mathbb{5}, \frac{1}{6} \mathbb{6})(0)\]that checks whether the experiment consisting of four executions of the term $\{\frac{1}{6} 1  , \frac{1}{6} 2  ,\frac{1}{6}3  , \frac{1}{6}4  ,\frac{1}{6}5  , \frac{1}{6}6  \}$ actually gives an acceptable tuple of results in order for us to trust the term to actually compute according to the distribution $\frac{1}{6}\mathbb{1}, \frac{1}{6}\mathbb{2}, \frac{1}{6} \mathbb{3}, \frac{1}{6} \mathbb{4}, \frac{1}{6} \mathbb{5}, \frac{1}{6} \mathbb{6}$.

Supposing, for instance, that the four executions go exactly as before, we obtain:
\[\trust \lan 2,5,6,3\ran:\bool(\frac{1}{6}\mathbb{1}, \frac{1}{6}\mathbb{2}, \frac{1}{6} \mathbb{3}, \frac{1}{6} \mathbb{4}, \frac{1}{6} \mathbb{5}, \frac{1}{6} \mathbb{6})( 0)\]
The $\trust $ operator now checks whether the frequency distribution encoded by $ \lan 2,5,6,3\ran$ matches the desired probability distribution  $\frac{1}{6}\mathbb{1}, \frac{1}{6}\mathbb{2}, \frac{1}{6} \mathbb{3}, \frac{1}{6} \mathbb{4}, \frac{1}{6} \mathbb{5}, \frac{1}{6} \mathbb{6}$ specified in the type. Since the test only features four executions of the term -- that is, four throws of the dice -- the check will result negative in this case, and the whole term will reduce to $\terf :\bool(\frac{1}{6}\mathbb{1}, \frac{1}{6}\mathbb{2}, \frac{1}{6} \mathbb{3}, \frac{1}{6} \mathbb{4}, \frac{1}{6} \mathbb{5}, \frac{1}{6} \mathbb{6})(0)$ meaning that at the moment, after only four executions, we cannot trust the term to actually compute according to  the specified probability distribution.
\end{example}

\begin{example}
Suppose one wants to check whether the results of the dice are evenly distributed between even results and odd results. First, one needs to introduce two new atomic types: $\mathbb{even}$ and $ \mathbb{odd}$. Then, one can  specify, by the relation $\cals$ for atomic subtypes at the base of the subtyping relation $<:$, what results of the dice are even and what are odd:
\[\mathbb{1}\,\cals\,\mathbb{odd} \qquad 
\mathbb{2}\,\cals\,\mathbb{even}\qquad 
 \mathbb{3}\,\cals\,\mathbb{odd}\qquad 
 \mathbb{4}\,\cals\,\mathbb{even}\qquad 
 \mathbb{5}\,\cals\,\mathbb{odd}\qquad 
 \mathbb{6}\,\cals\,\mathbb{even}\] 
in such a way that the following subtyping relations hold:
\[\mathbb{1}\,<:\,\mathbb{odd} \qquad 
\mathbb{2}\,<:\,\mathbb{even}\qquad 
 \mathbb{3}\,<:\,\mathbb{odd}\qquad 
 \mathbb{4}\,<:\,\mathbb{even}\qquad 
 \mathbb{5}\,<:\,\mathbb{odd}\qquad 
 \mathbb{6}\,<:\,\mathbb{even}\] 
 
Now one can test the results of the dice with respect to the desired probability of even results and odd results. We do it by  constructing the following term:

\[\trust \test _{10} \{\frac{1}{6} 1  , \frac{1}{6} 2  ,\frac{1}{6}3  , \frac{1}{6}4  ,\frac{1}{6}5  , \frac{1}{6}6  \} : \bool(\frac{1}{2}\mathbb{even}, \frac{1}{2}\mathbb{odd})(0)\]

To be a bit more careful,  execute the term ten times during the experiment in order to have a slightly more meaningful insight on its probabilistic behaviour. After several reduction steps, the term might reduce, for instance to

\[\trust \lan 2, 1, 5, 4, 6, 3,4 ,4, 1, 5\ran : \bool(\frac{1}{2}\mathbb{even}, \frac{1}{2}\mathbb{odd})(0)\]

which gives us a frequency distribution where $\frac{5}{10}$ of the results are even and $\frac{5}{10}$ are odd, as formally specified by the types of the terms in the tuple modulo our subtyping relation. This clearly agrees with the desired probability distribution  $\frac{1}{2}\mathbb{even}, \frac{1}{2}\mathbb{odd}$ and hence triggers the reduction of the previous term into the term $\tert : \bool(\frac{1}{2}\mathbb{even}, \frac{1}{2}\mathbb{odd})(0)$ that specifies that we trust our original term to yield results that are even half of the times and odd the other half of the times.
\end{example}

\section{Reasoning on the potential behaviour  of terms}\label{sec:potential-analysis}

The runtime analysis that can be conducted inside the calculus by the operators $\test$ and $\trust$ is clearly limited to specific executions of probabilistic programs. Even though we can use the  operator $\test_n$ with greater and greater numbers $n$ in order to study better approximations of the expected probabilistic behaviour of a term, we will never have a complete formal guarantee that a probabilistic term will behave exactly as expected during a single computation or that a particular runtime experiment will yield exactly the desired outcome. Probabilistic terms, indeed, behave in a probabilistic way. Nevertheless, the calculus \lamt{} gives us the possibility to reason about the behaviour of terms by an analysis that evades this problem by considering the potential probabilistic behaviour of a term in general and not its actual behaviour during one computation. This can be done by studying the tree of the possible reductions of a term $t$, see \cite{dhw05,bdgm16} for semantics based on this idea. Such a tree has the term $t$ as its root, the nodes of the tree are the terms to which the original term can reduce to, and each edge represents the one-step reduction that applied to a node yields one of the children. The leaves of the tree are then the values that can be obtained by reducing the term $t$ at the root. Each edge is moreover labelled by the probability that the associated reduction will be actually triggered. The probability to obtain one particular leaf $l$ by reducing the term $t$ at the root is computed by multiplying all labels that we encounter in the path between the root $t$ and the leaf $l$. Since a single value $s$, obtainable by reducing  $t$, might occur at several leafs, the probability to obtain the particular value $s$ by reducing $t$ is computed by summing up the probabilities of all the leafs at which we have $s$.

This kind of analysis formalises the idea that by conducting a series of experiments comprising an increasing number of tests each, the possible results of the experiments better and better approximate the expected probability distribution computed by the term under analysis. We will show in particular that, under reasonable conditions, a term deemed trustworthy at a stage $n$ of the analysis will be deemed trustworthy also at all later stages.

We begin by formally defining the tree displaying all possible reductions of a term.

\begin{definition}[Reduction tree]\label{def:reduction-tree}
A reduction tree is a tree with \lamt-terms as nodes and each edge labelled by a rational number. For any typed \lamt-term $t:T$ with no free variables, the reduction tree of $t$ is denoted $\calt _t$ and is inductively defined as follows:
\begin{itemize}
\item the root of $\calt _t$ is $t$
\item for each node $u$ in $\calt _t$ and for each instance  $u \red _p u'$ of a reduction rule, there is a child $u'$ of $u$ such that the edge between $u$ and $u'$ is labelled by $p$.
\end{itemize}

\end{definition}

Notice that if a reduction tree contains a node $u$, its subtree rooted at $u$ is always exactly $\calt_u$. Moreover, if more than one reduction of a term $u$ yields the same term $u'$, we will have in $\calt_u$ a distinct children $u'$ of $u$ for each distinct  reduction. For instance, $\calt _{\{\frac{1}{2}\tert, \frac{1}{2}\tert\}}$ is
\[\begin{tikzpicture}[level distance=55pt,sibling distance=30pt] \Tree [.$\{\frac{1}{2}\tert,\frac{1}{2}\tert\}$ \edge node[left] {$\frac{1}{2}\;$}; [.$\tert$ ] \edge node[right] {$\;\frac{1}{2}$}; [.$\tert$ ] ]\end{tikzpicture}
\]

It is also easy to see that the sum of the labels of the edges from a node to its children is always $1$. Indeed, due to the adopted reduction stategy, to a \lamt-term we can either apply only one reduction with probability $1$, or several probabilistic reductions. In the second case, the sum of the probabilities of these reductions is still $1$. 

%The termination proof for \lamt -- Theorem \ref{thm:logical-termination} -- guarantees that the reduction tree of a closed term $t:T$ is always finite.

Let us consider some examples of reduction trees. 

\begin{example}
Suppose to add to the calculus the atomic types $H, T$ and two constant terms $\hcoin:H$, for heads, and $\tcoin:T$, for tails. The reduction tree of the term $ (\lam x . \{\frac{2}{3}\hcoin, \frac{1}{3}\tcoin\}) \tert$ -- which simply takes the input $\tert$ as a start signal, discards it, and then reduces to a biased coin -- is the following:\[\begin{tikzpicture}[level distance=35pt,sibling distance=10pt] \Tree
[.$(\lam x.\{\frac{2}{3}\hcoin,\frac{1}{3}\tcoin\})\tert$ \edge node[right] {$1$}; [.$\{\frac{2}{3}\hcoin,\frac{1}{3}\tcoin\}$ \edge node[left] {$\frac{2}{3}$}; [.$\hcoin$ ] \edge node[right] {$\frac{1}{3}$}; [.$\tcoin$ ] ] ]\end{tikzpicture}\]
\end{example}

\begin{example}
The reduction tree of the slightly more complex term 

$$(\lam x.\{\frac{1}{8}\hcoin,\frac{3}{8}\tcoin,\frac{2}{8}\hcoin,\frac{2}{8}\{\frac{1}{2}(\lam y.y)\tcoin,\frac{1}{2}\hcoin\}\})\tert$$

is the following:
\[\begin{tikzpicture}[level distance=55pt,sibling distance=30pt] \Tree [.$(\lam x.\{\frac{1}{8}\hcoin,\frac{3}{8}\tcoin,\frac{2}{8}\hcoin,\frac{2}{8}\{\frac{1}{2}(\lam y.y)\tcoin,\frac{1}{2}\hcoin\}\})\tert$ \edge node[right] {$1$}; [.$\{\frac{1}{8}\hcoin,\frac{3}{8}\tcoin,\frac{2}{8}\hcoin,\frac{2}{8}\{\frac{1}{2}(\lam y.y)\tcoin,\frac{1}{2}\hcoin\}\}$ \edge node[left] {$\frac{1}{8}\;\;\;\;$}; [.$\hcoin$ ] \edge node[right] {$\;\frac{3}{8}$}; [.$\tcoin$ ] \edge node[right] {$\frac{2}{8}$}; [.$\hcoin$ ] \edge node[right] {$\;\;\;\;\;\frac{2}{8}$}; [.$\{\frac{1}{2}(\lam y.y)\tcoin,\frac{1}{2}\hcoin\}$ \edge node[left] {$\frac{1}{2}\;$}; [.$(\lam y.y)\tcoin$ \edge node[right] {$1$}; [.$\tcoin$ ] ] \edge node[right] {$\;\frac{1}{2}$}; [.$\hcoin$ ] ] ] ]\end{tikzpicture}\]
According to last tree, the probability of reducing the term according to, say, the rightmost path leading to the constant $h$ is $\frac{1}{2}\cdot\frac{2}{8}\cdot 1 = \frac{2}{16}= \frac{1}{8}$. The global probability of reducing the term to the constant $h$ -- that is, the probability of obtaining the output $h$ by any computation -- is $(\frac{1}{8}\cdot 1)+ (\frac{2}{8}\cdot 1)+(\frac{1}{2}\cdot\frac{2}{8}\cdot 1) = \frac{1}{2}$. On the other hand, the global probability of reducing the term to the constant $t$ is  $(\frac{3}{8}\cdot 1)+ (1\cdot \frac{1}{2}\cdot \frac{2}{8}\cdot  1) = \frac{1}{2}$.  
\end{example}

Let us now consider some examples that will naturally lead to the definition of a notion of confidence that induces an ordering between terms with respect to given probability distributions. This notion of confidence is based on the notion of trust formalised inside the calculus but enables us to abstract from particular computations and to consider  the global computational behaviour of a probabilistic term.

%Let us consider a simplified example of computation of the measure $\conf_n(\;,\;)$ where we adopt the threshold $\frac{1}{4}$ rather than the threshold $\frac{1}{20}$. This choice of a less strict threshold is made in order to exemplify  some interesting dynamics without having to show experiments consisting of large numbers of executions of a term. The dynamics that we would witness with the stricter threshold $\frac{1}{20}$ and more thorough experiments  are completely analogous to those considered here.

\begin{example}

Consider again the constants $\hcoin:H$, for heads, and $\tcoin:T$, for tails, and
%Suppose again that we have in our calculus the atomic types $H, T$ and two constant terms $\hcoin:H$, for heads, and $\tcoin:T$, for tails, and consider 
the term $ \{\frac{1}{2}\hcoin, \frac{1}{2}\tcoin\}:H\oplus T$  implementing a fair coin. Consider then the term $\test _4 \{\frac{1}{2}\hcoin{}, \frac{1}{2}\tcoin\}:(H\oplus T)^4$ implementing an experiment consisting of four tosses of the fair coin. The reduction tree of $\test _4 \{\frac{1}{2}\hcoin{}, \frac{1}{2}\tcoin\}$ is of the form:{\tiny
\[\begin{tikzpicture}[level distance=55pt,sibling distance=0pt] \Tree [.$\test _4\{\frac{1}{2}\hcoin{},\frac{1}{2}\tcoin\}$ \edge node[left] {$1$}; [.$\lan\{\frac{1}{2}\hcoin,\frac{1}{2}\tcoin\},\{\frac{1}{2}\hcoin,\frac{1}{2}\tcoin\},\{\frac{1}{2}\hcoin,\frac{1}{2}\tcoin\},\{\frac{1}{2}\hcoin,\frac{1}{2}\tcoin\}\ran$ \edge node[left] {$\frac{1}{2}\;\;\;$}; [.$\lan \hcoin,\{\frac{1}{2}\hcoin,\frac{1}{2}\tcoin\},\{\frac{1}{2}\hcoin,\frac{1}{2}\tcoin\},\{\frac{1}{2}\hcoin,\frac{1}{2}\tcoin\}\ran$ 
\edge node[left] {$\frac{1}{2}\;\;$}; [.$\lan \hcoin,\hcoin,\{\frac{1}{2}\hcoin,\frac{1}{2}\tcoin\},\{\frac{1}{2}\hcoin,\frac{1}{2}\tcoin\}\ran$ 
\edge node[left] {$\frac{1}{2}\;$}; [.$\lan \hcoin,\hcoin,\hcoin,\{\frac{1}{2}\hcoin,\frac{1}{2}\tcoin\}\ran$ \edge node[left] {$\frac{1}{2}$}; [.$\lan \hcoin,\hcoin,\hcoin,\hcoin,\ran$ ] \edge node[right] {$\frac{1}{2}$}; [.$\lan \hcoin,\hcoin,\hcoin,\tcoin\ran$ ] ] \edge node[right] {$\;\frac{1}{2}$}; [.$\lan \hcoin,\hcoin,\tcoin,\{\frac{1}{2}\hcoin,\frac{1}{2}\tcoin\}\ran$ \edge node[left] {$\frac{1}{2}$}; [.$\lan \hcoin,\hcoin,\tcoin,\hcoin\ran$ ] \edge node[right] {$\frac{1}{2}$}; [.$\lan \hcoin,\hcoin,\tcoin,\tcoin\ran$ ] ] 
] \edge node[right] {$\;\;\frac{1}{2}$}; [.$\lan \hcoin,\tcoin,\{\frac{1}{2}\hcoin,\frac{1}{2}\tcoin\},\{\frac{1}{2}\hcoin,\frac{1}{2}\tcoin\}\ran$ \edge node[left] {$\frac{1}{2}$}; [.$\ldots$ ] \edge node[right] {$\frac{1}{2}$}; [.$\ldots$ ] ]
] \edge node[right] {$\;\;\;\frac{1}{2}$}; [.$\lan \tcoin,\{\frac{1}{2}\hcoin,\frac{1}{2}\tcoin\},\{\frac{1}{2}\hcoin,\frac{1}{2}\tcoin\},\{\frac{1}{2}\hcoin,\frac{1}{2}\tcoin\}\ran$ \edge node[left] {$\frac{1}{2}$}; [.$\ldots$ ] \edge node[right] {$\frac{1}{2}$}; [.$\ldots$ ] ] ] ]\end{tikzpicture}
\]}
Here is a complete list of the leaves of this tree:

\begin{center}
\begin{tabular}{cccccc}
$ \lan \hcoin,\hcoin,\hcoin,\hcoin \ran$
&
$ \lan \hcoin,\hcoin,\hcoin,\tcoin \ran$
&
$ \lan \hcoin,\hcoin,\tcoin,\hcoin \ran$
&
$ \lan \hcoin,\hcoin,\tcoin,\tcoin \ran$
&
$ \lan \hcoin,\tcoin,\hcoin,\hcoin \ran$
&
$ \lan \hcoin,\tcoin,\hcoin,\tcoin \ran$
\\
$ \lan \hcoin,\tcoin,\tcoin,\hcoin \ran$
&
$ \lan \hcoin,\tcoin,\tcoin,\tcoin \ran$
&
$ \lan \tcoin,\hcoin,\hcoin,\hcoin \ran$
&
$ \lan \tcoin,\hcoin,\hcoin,\tcoin \ran$
&
$ \lan \tcoin,\hcoin,\tcoin,\hcoin \ran$
&
$ \lan \tcoin,\hcoin,\tcoin,\tcoin \ran$
\\
$ \lan \tcoin,\tcoin,\hcoin,\hcoin \ran$
&
$ \lan \tcoin,\tcoin,\hcoin,\tcoin \ran$
&
$ \lan \tcoin,\tcoin,\tcoin,\hcoin \ran$
&
$ \lan \tcoin,\tcoin,\tcoin,\tcoin \ran$&&
\end{tabular}
\end{center}

Each one of these values are obtained with a probability of $\frac{1}{16}$ since each choice between $\hcoin$ and $\tcoin$ happens with a probability of $\frac{1}{2}$. Consider then the term $\trust \test _4 \{\frac{1}{2}\hcoin, \frac{1}{2}\tcoin\}(\frac{1}{2}H,\frac{1}{2}T)(\frac{1}{4}) :\bool (\frac{1}{2}H,\frac{1}{2}T)(\frac{1}{4})$ that checks whether the result of the experiment $\test _4 \{\frac{1}{2}\hcoin, \frac{1}{2}\tcoin\}$ contains two outputs of type $H$ and two of type $T$ and admits one mistake over the four outputs. In other words,  $\trust \test _4 \{\frac{1}{2}\hcoin, \frac{1}{2}\tcoin\} (\frac{1}{2}H,\frac{1}{2}T)(\frac{1}{4}) $  will consider any tuple containing at most three outputs of the same type as a positive result of the experiment -- and hence reduce to $\tert :\bool (\frac{1}{2}H,\frac{1}{2}T)(\frac{1}{4})$. Therefore, computing the probability of obtaining the output $\tert :\bool (\frac{1}{2}H,\frac{1}{2}T)(\frac{1}{4})$ from  $ \trust \test _4 \{\frac{1}{2}h, \frac{1}{2}t\} (\frac{1}{2}H,\frac{1}{2}T)(\frac{1}{4}) $ is only a simple matter of adding up the probabilities of all tuples that encode a suitable frequency distribution. In our case, the suitable tuples are $14$. Therefore, the probability that we will trust the term $ \{\frac{1}{2}\hcoin, \frac{1}{2}\tcoin\}$ to compute according to the probability distribution $\frac{1}{2}H,\frac{1}{2}T$ modulo a mistake of $\frac{1}{4}$ is $\frac{14}{16} =0,875$. This is a good result, but still based on a few executions of the term: only $4$. Nevertheless, if we construct a similar term $ \trust \test _{8} \{\frac{1}{2}\hcoin, \frac{1}{2}\tcoin\} (\frac{1}{2}H,\frac{1}{2}T)(\frac{1}{4}) $ implementing an experiment consisting of $8$ executions of our term $\{\frac{1}{2}\hcoin, \frac{1}{2}\tcoin\}:H\oplus T$ we can see that the probability increases to $ \frac{238}{256}=0,9296875$. If we raise the number of executions to $12$ and consider the reduction tree of $ \trust \test _{12} \{\frac{1}{2}\hcoin, \frac{1}{2}\tcoin\} (\frac{1}{2}H,\frac{1}{2}T)(\frac{1}{4}) $, we get an even greater probability of obtaining a positive result: $\frac{3938}{4096}=0,96142578125$. And, in general, we have that the greater the number $n$, the greater the probability of obtaining $\tert$ as output of $ \trust \test _{n} \{\frac{1}{2}h, \frac{1}{2}t\} (\frac{1}{2}H,\frac{1}{2}T)(\frac{1}{4})$.

\end{example}

Once the key notions and calculations that will be employed in the definition of our confidence order have been provided, we present their formal definitions.

\begin{definition}[Computation probability and output probability]
Let $t:T$ be any term, $\calt_t$ be the reduction tree of $t$, $l$ be any leaf of $\calt_t$, and $o$ be any term occurring at a leaf of $\calt_t$. The probability $\probab_{\mathrm{C}}(l, \calt_t)$ of the computation of $t$ yielding $l$ is defined as the product of all labels occurring on the path from the root of $\calt_t$ to $l$. The probability $\probab (o, \calt_t)$ of the potential output $o$ of $t$ is the sum $\sum_{i=1}^{n} \probab(l_i)$ where $l_1 ,\ldots , l_n$ is an enumeration of all leaves of $\calt_t$ at which the term $o$ occurs.
\end{definition}

\begin{definition}[Confidence value]\label{def:confidence}
For any typed \lamt-term $t:T$, the confidence value $\conf(t, \calp)$ of $t$ with respect to the distribution $\calp$ is defined as follows: 
%Let \begin{itemize}
%\item $\calt_n$ be the reduction tree of $\trust (\test_n t) (\calp)(\frac{1}{20})$
%\item $l^1_n, \ldots ,l^{m_n}_n $ be and enumeration of all leaves of $\calt _n$ of the form $\tert$
%\item $p^i_n$, for $1\leq i\leq  m_n$, be the product of the labels occurring on the path from the root of $\calt_n$ to the leaf $l^i_n$
%\end{itemize}
%Let $\calt_n$ be the reduction tree of $\trust (\test_n t) (\calp)(\frac{1}{20})$ then 
\[\conf (t, \calp)= f\]where $f$ is the function such that, for every natural number $n$,
\[ f(n)=\probab (\tert, \calt_{\trust (\test_n t) (\calp)(\frac{1}{20})})\]
\end{definition}
%In other words, $\conf_n(t, \calp)$ is the probability that the trustworthiness check with respect to the distribution $\calp$ with threshold $\frac{1}{20}$ conducted on an experiment consisting of $n$ executions of $t$ yields a positive result. We adopt here the standard threshold $\frac{1}{20}=0,05$ but, obviously, the definition can be easily adapted to other thresholds.
In other words, $\conf(t, \calp)$ is the function that associates to every natural number $n$ the probability that the trustworthiness check with respect to the distribution $\calp$ with threshold $\frac{1}{20}$ conducted on an experiment consisting of $n$ executions of $t$ yields a positive result. We fix here the threshold to $\frac{1}{20}=0,05$ since it is a standard choice in the field of statistics. This number is usually set to $0,05$ or to $0,01$, depending on the phenomena considered by the particular statistical investigation at hand, and it is often called either {\it level of significance} or P-value. Obviously, nevertheless, our definition can be adapted with no effort to other thresholds.

It is easy to define then partial orders $\prec, \preceq$ and an equivalence relation $\equiv$ on the confidence values $\conf(t,\calp)=f$ by comparing how fast the values of $f(n)$ grow for $n$ approaching infinity. 

\begin{definition}[Confidence order]\label{def:confidence-order}
For any two terms $t,t'$, probability distributions $\calp, \calp'$ and functions from the natural numbers to the real numbers $f$ and $g$ such that $ \conf(t,\calp) =f$ and $\conf(t',\calp')=g$, we have 
\[ \conf(t,\calp)\prec \conf(t',\calp')\qquad 
\Leftrightarrow 
%\text{if, and only if,}
\qquad\lim_{n\rightarrow\infty}\frac{f(n)}{g(n)}<1\]
\[ \conf(t,\calp)\equiv\conf(t',\calp')\qquad 
\Leftrightarrow 
%\text{if, and only if,}
\qquad \lim_{n\rightarrow\infty}\frac{f(n)}{g(n)}= 1\]
%{\option \[ \conf(t,\calp)\preceq  \conf(t',\calp')\qquad 
%\Leftrightarrow 
%%\text{if, and only if,}
%\qquad \lim_{n\rightarrow\infty}\frac{f(n)}{g(n)}\leq 1\]}
\end{definition}

%Suppose, indeed, that\[\lim_{n\rightarrow\infty}\frac{f(n)}{g(n)}=m_1>1\qquad\qquad  \lim_{n\rightarrow\infty}\frac{g(n)}{h(n)}=m_2>1\]Then, given any $\varepsilon>0$, we have that there is an $n_1$ such that, for any $n_1'>n_1$, $m_1-\varepsilon < \frac{f(n_1')}{g(n_1')} < m_1+\varepsilon$ and there is an $n_2$ such that, for any $n_2'>n_2$,  $m_2-\varepsilon <\frac{g(n_2')}{h(n_2')} < m_2+\varepsilon $. Let us consider an $n_3$ such that $n_1<n_3>n_2$. Then  
%\[m_1-\varepsilon < \frac{f(n_3)}{g(n_3)}= c < m_1+\varepsilon\qquad \qquad  m_2-\varepsilon <\frac{g(n_3)}{h(n_3)}=d < m_2+\varepsilon \]
%Therefore, we have
%\[f(n_3)=g(n_3)\cdot c \qquad h(n_3)=g(n_3)\cdot\frac{1}{d} \]
%which yields
%\[\frac{f(n_3)}{h(n_3)}=\frac{g(n_3)\cdot c}{g(n_3)\cdot\frac{1}{d}}=c\cdot d\]
%where\[-\varepsilon < c\cdot d < +\varepsilon\]
%Hence, in conclusion, $\lim_{n\rightarrow\infty}\frac{f(n)}{g(n)}=m_1\cdot m_2>1$ since both $m_1$ and $m_2$ are greater than $1$.  

\begin{proposition}The relation $\prec$ is a strict partial order.\end{proposition}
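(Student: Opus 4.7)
The plan is to verify the two defining conditions of a strict partial order, namely irreflexivity and transitivity; asymmetry then follows automatically. Since $\prec$ is defined through the limit of the pointwise ratio of two functions $f,g\colon\mathbb{N}\to[0,1]$, the whole argument reduces to an elementary application of the algebra of limits, with some care taken about the cases in which the denominators vanish or the limits fail to exist.

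For irreflexivity I would take an arbitrary $\conf(t,\calp)=f$ and examine $\lim_{n\to\infty} f(n)/f(n)$. If $f(n)$ is eventually positive, the ratio is eventually equal to $1$, so the limit is $1$ and in particular not strictly less than $1$. If instead $f(n)=0$ for infinitely many $n$, the ratio is undefined on an infinite set and the limit does not exist in the usual sense, so the defining condition of $\prec$ is likewise not satisfied. Either way $\conf(t,\calp)\not\prec\conf(t,\calp)$.

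For transitivity, I would assume $\conf(t_1,\calp_1)\prec\conf(t_2,\calp_2)$ and $\conf(t_2,\calp_2)\prec\conf(t_3,\calp_3)$, with associated functions $f,g,h$, and set $a=\lim_{n\to\infty} f(n)/g(n)$ and $b=\lim_{n\to\infty} g(n)/h(n)$. By hypothesis both limits exist and are strictly less than $1$. Their existence already forces $g(n),h(n)\neq 0$ for all sufficiently large $n$, so the pointwise identity $f(n)/h(n)=(f(n)/g(n))\cdot(g(n)/h(n))$ is valid from some stage onwards. The algebra of limits then yields $\lim_{n\to\infty} f(n)/h(n)=ab$. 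Since all the $\conf$-functions take values in $[0,1]$ we have $a,b\geq 0$; together with $b\leq 1$ and $a<1$ this gives $ab\leq a<1$, so $\conf(t_1,\calp_1)\prec\conf(t_3,\calp_3)$ as required.

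I expect the only mildly subtle point to be the handling of confidence values that vanish for infinitely many $n$, which can happen in principle, for instance when the target distribution $\calp$ and the threshold are chosen so that no experiment can ever produce a matching frequency. This is what forces the verification of irreflexivity to split into the two cases above, but it causes no trouble for transitivity, because the existence of the two hypothesised limits already rules out the eventual vanishing of the relevant denominators.
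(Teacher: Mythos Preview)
Your proof is correct and follows essentially the same route as the paper's---both reduce the verification to the algebra of limits of ratios. You are in fact more careful than the paper: you handle the edge case of vanishing confidence values in the irreflexivity argument and explicitly invoke nonnegativity to conclude $ab<1$ in the transitivity step, whereas the paper glosses over these points; you also correctly observe that asymmetry need not be checked separately, while the paper proves it as a third item.
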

\begin{proof}We prove that $\prec$ is irreflexive, asymmetric  and transitive.
\begin{itemize}
\item Irreflexivity. Indeed, $\lim_{n\rightarrow\infty}\frac{f(n)}{f(n)}=1\ngtr 1$.  
\item Asymmetry. Indeed, if $\lim_{n\rightarrow\infty}\frac{f(n)}{g(n)}=m < 1$ then $\lim_{n\rightarrow\infty}\frac{g(n)}{f(n)}=\frac{1}{m}>1$.
%and if  $\lim_{n\rightarrow\infty}\frac{f(n)}{g(n)}=\infty$ then $\lim_{n\rightarrow\infty}\frac{g(n)}{f(n)}=0\ngtr 1$.
\item Transitivity. Indeed, $\frac{f(n)}{h(n)} =\frac{f(n)}{g(n)} \cdot \frac{g(n)}{h(n)} $ and if $ \lim_{n\rightarrow\infty}\frac{f(n)}{g(n)}<1$ and $  \lim_{n\rightarrow\infty}\frac{g(n)}{h(n)}<1$, we have also that $  \lim_{n\rightarrow\infty}\frac{f(n)}{h(n)}<1$ since the limit of the product of two functions equals the product of their limits.
\end{itemize}
\end{proof}

\begin{proposition}The relation $\equiv$ is an equivalence relation.
\end{proposition}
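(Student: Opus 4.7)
The plan is to verify the three defining properties of an equivalence relation (reflexivity, symmetry, transitivity) directly from Definition \ref{def:confidence-order}, following the same style of argument already used in the previous proposition for $\prec$.

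First, I would establish reflexivity: for any term $t$ and distribution $\calp$ with $\conf(t,\calp)=f$, we have $\lim_{n\to\infty} \frac{f(n)}{f(n)} = 1$, so $\conf(t,\calp) \equiv \conf(t,\calp)$. For symmetry, suppose $\conf(t,\calp) \equiv \conf(t',\calp')$ with $\conf(t,\calp)=f$ and $\conf(t',\calp')=g$, so that $\lim_{n\to\infty} \frac{f(n)}{g(n)} = 1$. Then since the reciprocal function $x \mapsto 1/x$ is continuous at $x=1$, we have $\lim_{n\to\infty} \frac{g(n)}{f(n)} = \frac{1}{\lim_{n\to\infty} f(n)/g(n)} = 1$, giving $\conf(t',\calp') \equiv \conf(t,\calp)$.

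For transitivity, suppose $\conf(t,\calp) \equiv \conf(t',\calp')$ and $\conf(t',\calp') \equiv \conf(t'',\calp'')$ with corresponding functions $f,g,h$. Using the same factorisation trick as in the previous proposition, write $\frac{f(n)}{h(n)} = \frac{f(n)}{g(n)} \cdot \frac{g(n)}{h(n)}$ and apply the product rule for limits: $\lim_{n\to\infty} \frac{f(n)}{h(n)} = 1 \cdot 1 = 1$, so $\conf(t,\calp) \equiv \conf(t'',\calp'')$.

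The only subtlety I anticipate is a well-definedness issue: the ratios $f(n)/g(n)$ only make sense when the denominator is nonzero, and one must read Definition \ref{def:confidence-order} as implicitly requiring the limit to exist (so that the quotients are eventually defined). This is not really an obstacle, since by Definition \ref{def:confidence} each value $f(n)$ is a probability in $[0,1]$, and for the relation to hold non-trivially the denominators must be eventually nonzero; one may even note that in the edge case where some function is eventually $0$ the relation $\equiv$ simply does not hold between such a term and a term whose $f$ is not eventually $0$, but reflexivity can be handled by the convention $0/0 = 1$ or by restricting $\equiv$ to terms whose confidence is eventually positive. Modulo this mild convention, the proof is essentially a transcription of standard properties of limits, and no deeper ingredient is required.
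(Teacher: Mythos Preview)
Your proposal is correct and follows essentially the same route as the paper: reflexivity via $\lim f(n)/f(n)=1$, symmetry via the quotient/reciprocal rule for limits, and transitivity via the factorisation $\frac{f(n)}{h(n)}=\frac{f(n)}{g(n)}\cdot\frac{g(n)}{h(n)}$ together with the product rule. Your additional remarks on the well-definedness of the ratios when a denominator may vanish are in fact more careful than the paper, which leaves that issue implicit.
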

\begin{proof}We prove that $\equiv$ is reflexive, symmetric  and transitive.
\begin{itemize}
\item Reflexivity. Indeed,  $\lim_{n\rightarrow\infty}\frac{f(n)}{f(n)}= 1$. 
\item Symmetry. Indeed, if $\lim_{n\rightarrow\infty}\frac{f(n)}{g(n)}=1$ then $\lim_{n\rightarrow\infty}\frac{1}{\left(\frac{f(n)}{g(n)}\right)}=\frac{1}{1}=1$ since the limit of the constant function $1$ is $1$ and the limit of the division of two functions is the division of their limits. But $\frac{1}{\left(\frac{f(n)}{g(n)}\right)}=\frac{g(n)}{f(n)}$ and, therefore, $\lim_{n\rightarrow\infty}\frac{g(n)}{f(n)}=1$ as well.

\item Transitivity. Indeed, $\frac{f(n)}{h(n)} =\frac{f(n)}{g(n)} \cdot \frac{g(n)}{h(n)} $ and if $ \lim_{n\rightarrow\infty}\frac{f(n)}{g(n)}=1$ and $  \lim_{n\rightarrow\infty}\frac{g(n)}{h(n)}= 1$, we have also that $  \lim_{n\rightarrow\infty}\frac{f(n)}{h(n)}= 1$ since the limit of the product of two functions equals the product of their limits.
\end{itemize}
\end{proof}

%{\option
%\begin{proposition}The relation $\preceq$ is a non-strict partial order with respect to the equivalence relation $\equiv$.
%\end{proposition}
%\begin{proof}We prove that $\preceq$ is reflexive, antisymmetric with respect to the equivalence relation $\equiv$, and transitive.
%\begin{itemize}
%\item Reflexivity. Indeed,  $\lim_{n\rightarrow\infty}\frac{f(n)}{f(n)}= 1\leq 1$.
%\item Antisymmetry. Indeed, $\lim_{n\rightarrow\infty}\frac{f(n)}{g(n)}\leq 1$ and $\lim_{n\rightarrow\infty}\frac{g(n)}{f(n)}\leq 1$ imply that $\conf(t,\calp)\equiv\conf(t',\calp')$ because $\lim_{n\rightarrow\infty}\frac{f(n)}{g(n)}= 1=\lim_{n\rightarrow\infty}\frac{g(n)}{f(n)}$. This, in turn, is true because if we had $ \lim_{n\rightarrow\infty}\frac{f(n)}{g(n)}=m<1$ then   $ \lim_{n\rightarrow\infty}\frac{g(n)}{f(n)}=\frac{1}{m}>1$, and if we had   $ \lim_{n\rightarrow\infty}\frac{g(n)}{f(n)}=m < 1$ then $ \lim_{n\rightarrow\infty}\frac{f(n)}{g(n)}=\frac{1}{m}>1$, and both cases  contradict the assumptions.
%\item Transitivity. Indeed, $\frac{f(n)}{h(n)} =\frac{f(n)}{g(n)} \cdot \frac{g(n)}{h(n)} $ and if $ \lim_{n\rightarrow\infty}\frac{f(n)}{g(n)}\leq 1$ and $  \lim_{n\rightarrow\infty}\frac{g(n)}{h(n)}\leq 1$, we have also that $  \lim_{n\rightarrow\infty}\frac{f(n)}{h(n)}\leq 1$ since the limit of the product of two functions equals the product of their limits.
%\end{itemize}
%\end{proof}
%}

\begin{theorem}[Increasing trust]\label{thm:increasing-trust}
Let $t:T$ be any term, $o_1, \ldots , o_n$ be an enumeration without repetitions of the terms occurring at the leaves of $\calt _t$, $\calp$ be any probability distribution. If \begin{itemize}\item $o_i:T_i$ for any $1\leq i\leq m$ \item $T_i\neq T_j$ for any $1\leq i\neq j\leq m$ \item $\calp = (\probab(o_1,\calt_t)\,T_1, \ldots , \probab(o_m,\calt_t)\, T_m)$\end{itemize}then $f(n)=\conf(t, \calp)$ approaches $1$ with probability $1$, for $n$ approaching infinity.
\end{theorem}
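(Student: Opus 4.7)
The plan is to reduce the theorem to an application of the strong law of large numbers (SLLN). By the reduction rule for $\test_n$, the term $\test_n t$ first reduces deterministically to $\lan t, \ldots, t\ran$, and by the call-by-name strategy the $n$ occurrences are then reduced autonomously inside the tuple. Consequently, the sub-reduction-tree of $\calt_{\lan t,\ldots,t\ran}$ induced on the $k$-th component is a copy of $\calt_t$, and the joint distribution over the $n$ resulting values $s_1, \ldots, s_n$ is the product distribution $\calt_t^{\otimes n}$. Termination and Progress guarantee that each $s_k$ is a value, and the hypothesis that $o_1, \ldots, o_m$ is a repetition-free enumeration of the leaves with pairwise distinct types $T_1, \ldots, T_m$ ensures that each $s_k$ equals exactly one $o_i$, with probability $p_i := \prob(o_i, \calt_t)$.

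Next, I would introduce the Bernoulli indicators $X_{i,k}$ taking value $1$ when $s_k = o_i$ and $0$ otherwise, for $1\leq i\leq m$ and $1\leq k\leq n$. For each fixed $i$ the family $(X_{i,k})_{k\geq 1}$ is i.i.d. with mean $p_i$. Setting $N_i(n) = \sum_{k=1}^{n} X_{i,k}$, the empirical distribution encoded by the tuple is
\[\calp'_n \;=\; \Bigl(\tfrac{N_1(n)}{n}\,T_1,\; \ldots,\; \tfrac{N_m(n)}{n}\,T_m\Bigr),\]
and by the definition of total variation distance (using the pairwise distinctness of the $T_i$),
\[\dist(\calp \parallel \calp'_n) \;=\; \max_{1\leq i\leq m} \Bigl|\,p_i - \tfrac{N_i(n)}{n}\,\Bigr|.\]

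Now I would apply the SLLN pointwise in $i$ to obtain $N_i(n)/n \to p_i$ almost surely; since the maximum over the finite index set $\{1,\ldots,m\}$ preserves almost sure convergence, $\dist(\calp \parallel \calp'_n) \to 0$ almost surely. Hence the event $E_n = \{\dist(\calp \parallel \calp'_n) \leq 1/20\}$ satisfies $\mathrm{Pr}(E_n) \to 1$. By the reduction rules for $\trust$, the term $\trust\,\lan s_1,\ldots,s_n\ran\,(\calp)(1/20)$ reduces to $\tert$ exactly on $E_n$ and to $\terf$ on its complement, so
\[f(n) \;=\; \prob\bigl(\tert,\; \calt_{\trust(\test_n t)(\calp)(1/20)}\bigr) \;=\; \mathrm{Pr}(E_n) \;\longrightarrow\; 1 \quad\text{as } n\to\infty.\]

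The main obstacle will be the measure-theoretic justification of the very first step: that the leaves of $\calt_{\test_n t}$ really correspond to $n$ independent samples from $\calt_t$, with output probabilities $\prob(o_i, \calt_t)$ per sample. This requires combining the reduction rule $\test_n s \red_1 \lan s,\ldots,s\ran$, the call-by-name clause allowing only the leftmost non-value tuple entry to reduce (so distinct components never share probabilistic choices), and the fact that on each individual component the sub-tree coincides with $\calt_t$. Once this factorization is established, the rest is a direct invocation of the SLLN and the definition of $\dist$.
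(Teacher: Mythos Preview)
Your proposal is correct and follows essentially the same approach as the paper: define Bernoulli indicators for each output type, apply the strong law of large numbers componentwise, and conclude that the empirical frequencies converge to $\calp$, forcing $f(n)\to 1$. Your version is in fact more careful than the paper's---you make explicit the independence of the $n$ components (via the call-by-name tuple rule), the exact form of $\dist(\calp\parallel\calp'_n)$ as a finite maximum, and the passage from almost-sure convergence of $\dist$ to deterministic convergence of $f(n)=\Pr(E_n)$---points the paper leaves implicit.
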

\begin{proof}
This theorem is a rather direct application of the strong law of large numbers -- see, for instance, \cite[Sec. 6.6, Thm. 7]{ash08}.

Let us consider a term $t$ with potential outputs $o_i$, for $1\leq i\leq m$, and let us define $m$ random variables as follows: each $X^i$, for $1\leq i\leq m$, has possible values $0$ and $1$ and  expected value $E(X^i)= (1\cdot \probab(o_i,\calt_t))+(0\cdot (1-\probab(o_i,\calt_t)))=\probab(o_i,\calt_t)$. A sequence of random variables $X_1^i, \ldots , X_n^i$ then encodes the count of how many times the output $o_i$ is obtained during an experiment consisting of $n$ executions of the term $t$. The expected value $E(X^i)= \probab(o_i,\calt_t)$ indeed corresponds to the frequency with which we expect $o_i$ to be obtained during an experiment on $t$. 

Notice that a sequence $X_1^i, \ldots , X_n^i$ such that $\frac{X_1^i, \ldots , X_n^i}{n}=E(X^i)$ encodes a count of the number of outcomes $o_i$ during an experiment consisting of $n$ executions of $t$ such that $ \probab(o_i,\calt_t)$ of the total outcomes was exactly $o_i$. Which means that such a sequence $X_1^i, \ldots , X_n^i$, for which $\frac{X_1^i, \ldots , X_n^i}{n}=E(X^i)$ holds, corresponds to tuples 
$\lan t_1, \ldots , t_n\ran$ of outputs of $t$ such that the number of occurrences of $o_i$ in the tuples is exactly $ \probab(o_i,\calt_t)$ of the total. In other words, since $o_i$ is the only output of $t$ of type $T_i$, such a sequence exactly corresponds to the tuples encoding a frequency distribution that perfectly agrees with the probability distribution $\calp$ with respect to the probability of outputs of type $T_i$. More in general, given any sequence $X_1^i, \ldots , X_n^i$, the closer $\frac{X_1^i, \ldots , X_n^i}{n}$ is to $E(X^i)$, the closer the frequency distribution encoded by the tuples corresponding to $X_1^i, \ldots , X_n^i$ is to the probability distribution $\calp$ with respect to the probability of $T_i$.

In case all random variables $X_1^i, \ldots , X_n^i$ have the same expected value $E(X^i)$, the strong law of large numbers states that 
\[\lim _{n \rightarrow \infty} \left(\frac{X_1^i + \ldots + X_n^i}{n}   \right)=E(X^i) \]with probability $1$. And this clearly holds for any $1\leq i\leq m$.

Since each sequence $X^1_i, \ldots , X^n_i$, for $1\leq i\leq m$, such that $\frac{X^1_i, \ldots , X^n_i}{n}=E(X_i)$ exactly corresponds to the tuples encoding a frequency distributions that perfectly agrees with the probability distribution $\calp$ with respect to the probability of outputs of type $T_i$ and since the closer $\frac{X_1^i, \ldots , X_n^i}{n}$ is to $E(X^i)$, the closer the  frequency distribution encoded by the tuples corresponding to $X_1^i, \ldots , X_n^i$ is to the probability distribution $\calp$ with respect to the probability of $T_i$; the strong law of large numbers  implies that, with probability $1$, for $n$ approaching infinity, the tuples $\lan t_1, \ldots , t_n\ran$ of outputs of $t$ that we obtain encode frequency distributions which are closer and closer to the probability distribution $\calp$ with respect to all potential outputs $o_1:T_1, \ldots , o_m:T_m$ of $t$.

%If we take $\calt_n$ to be the reduction tree of $\trust (\test_n t) (\calp)(\frac{1}{20})$, it then follows that, i

If $n$ approaches infinity, $\probab (\tert, \calt_{\trust (\test_n t) (\calp)(\frac{1}{20})})$ approaches $1$ with probability $1$. Indeed, $\probab (\tert, \calt_{\trust (\test_n t) (\calp)(\frac{1}{20})})$ is the probability of obtaining $\tert$ in an experiment consisting of $n$ executions of $t$, where $\tert$ is obtained when the result of the experiment encodes a frequency distribution with distance not greater than $\frac{1}{20}$ from $\calp$. But since $\probab (\tert, \calt_{\trust (\test_n t) (\calp)(\frac{1}{20})})$ approaches $1$, by Definition \ref{def:confidence},  also $f(n)=\conf(t, \calp)$ approaches $1$ with probability $1$.
\end{proof}
Notice that the statement can be easily generalised to the slightly more complex case in which some distinct potential outputs $o_i$ and $o_j$ have the same type $T_i$ by requiring that $\calp = (\probab(o_1,\calt _t)\,T_1, \ldots , \probab(o_1,\calt_t)+\probab (o_j, \calt_t)\,T_i ,\ldots  , \probab(o_n,\calt_t)\, T_n)$.

\section{Conditional probability, conjunctions, disjunctions}\label{sec:conditional-probability}

As shown in Section \ref{sec:potential-analysis}, the calculus \lamt{} allows to predict the exact probability of each potential output of any term by studying its reduction tree. In the following, we show how to easily encode conditional constructs using as inputs probabilistic terms that behave as expected with respect to this method for computing the probability of potential outputs. In the process, we show that the calculus supports a straightforward way of computing the probability of conjunctions of probabilistic events. Afterwards, we briefly show that the probability of disjunctions of probabilistic events is also easily computed.

Let us begin with conditional constructs. If we consider a probabilistic choice $t=\{p_1t_1, \ldots , p_nt_n\}$ and several terms $s_1, \ldots , s_n$, we can easily construct a conditional program $t'$ that selects a term $s_i$ from $s_1, \ldots , s_n$ on the basis of the probabilistic output of $t$ and then behaves just like $s_i$. If  the terms $s_1, \ldots , s_n$ are probabilistic terms themselves, we have that the output of $t'$ probabilistically depends on the outputs of $s_1, \ldots , s_n$ conditioned with respect to the output of $t$.

\begin{example}
    
The output of an unfair coin $\{\frac{1}{3} \hcoin , \frac{2}{3}\tcoin\}$ can be used to trigger two different probabilistic processes $\{\frac{1}{3} 1 , \frac{1}{3}2, \frac{1}{3}3\} $ and $\{\frac{1}{3} 4 , \frac{1}{3}5, \frac{1}{3}6\}$ in such a way that the first is run if, and only if, the coin yields $\hcoin$ and the second is run if, and only if, the coin yields $\tcoin$. We can do this in a modular way by, first, defining the abstraction $\lambda x_1 .\lambda x_2 . \{\frac{1}{3} \lan \hcoin , x_1 \ran , \frac{2}{3}\lan \tcoin , x_2 \ran\}$ of the term implementing the unfair coin. An abstraction of this kind can be simply obtained, in general, by manually replacing the terms of the choice -- $\hcoin, \tcoin$ in our case -- by pairs containing one of the original terms and one of the bound variables -- $x_1,x_2$ here. Secondly, we apply this abstraction to the  terms that are supposed to depend on the output of the choice: \[(\lambda x_1 .\lambda x_2 . \{\frac{1}{3} \lan \hcoin , x_1 \ran , \frac{2}{3}\lan \tcoin , x_2 \ran\})\; \{\frac{1}{3} 1 , \frac{1}{3}2, \frac{1}{3}3\} \;\{\frac{1}{3} 4 , \frac{1}{3}5, \frac{1}{3}6\}\] The reduction of this term will yield the probabilistic term \[\{\frac{1}{3}  \lan \hcoin , \{\frac{1}{3} 1 , \frac{1}{3}2, \frac{1}{3}3\}\ran  , \frac{2}{3}  \lan \tcoin , \{\frac{1}{3} 4 , \frac{1}{3}5, \frac{1}{3}6\}\ran \}\] which reduces to $ \lan \hcoin ,  \{\frac{1}{3} 1 , \frac{1}{3}2, \frac{1}{3}3\}\ran  $ exactly when our original unfair coin would have yielded $\hcoin $ and to $\lan \tcoin , \{\frac{1}{3} 4 , \frac{1}{3}5, \frac{1}{3}6\}\ran $ exactly when our unfair coin would have yielded $\tcoin$.
\end{example}

Let us formalise this intuition and define a general procedure to build conditional constructs of this kind.

\begin{definition}[Conditional construct]\label{def:conditional-construct}
For any term $t=\{p_1t_1, \ldots , p_nt_n\}$ and list of terms $s_1, \ldots , s_n$ , we denote by \[(s_1, \ldots , s_n\mid t) \] the term \[(\lambda x_1  .\, \ldots \,  \lambda x_n. \{p_1\lan t_1, x_1\ran, \ldots , p_n\lan t_n , x_n\ran \}) s_1\, \ldots\, s_n\]
\end{definition}

As for the typing of the tuples used in the previous definition, notice that, for any list of terms $u_1:U_1, \ldots , u_m:U_m$, we can construct a type derivation for the term $\lan u_1 , \ldots , u_m \ran: (U_1 \oplus \ldots \oplus U_m)^m $ since, by Definition \ref{def:subtype}, each type in the list $U_1 , \ldots , U_m$ is a subtype of $U_1 \oplus \ldots \oplus U_m$.

We can now study what kind of relationship conditional constructs of this kind entertain with conditional probability.
\begin{definition}[Conditional probability]\label{def:conditional-probability} Let \[c=(s_1, \ldots , s_n \mid \{p_1t_1, \ldots , p_nt_n\})\]be a conditional construct. 
The conditional probability $\probab (u\mid t_i, \calt_{c} )$, for $1\leq i\leq n$ -- that is, the probability of obtaining $u$ as second element of the output of $c$ conditioned on the fact that the first element of the output is $t_i$ -- is defined as 
$\probab (u , \calt_{s_i})$.

%$\probab (\lan t_i ,  u\ran , \calt_{v_1})+\ldots + \probab (\lan t_i ,  u\ran , \calt_{v_m})$ where $v_1, \ldots , v_m$ is a  list of all occurrences of terms in $s_1, \ldots , s_n$ such that $v_j=s_i$ for $1\leq j\leq m$, and $\calt_{v_j}$ is the reduction tree of $v_j$.
\end{definition}

%\begin{definition}[Conjunction probability]
%Given any two terms $\{p_1t_1, \ldots , p_nt_n\}$ and $\{p'_1t'_1, \ldots , p'_nt'_n\}$ the conjunction probability $\probab(t_i\wedge t'_j, \calt )$ of the conjunction 
%\end{definition}

%\begin{definition}[Disjunction probability] 
%
%USE SUM OF PROBABILITIES OF OUTPUTS IN REDUCTION TREE
%
%\end{definition}

We show that the numerical value that we obtain by this definition of conditional probability is exactly the same that we would obtain by the usual definition of  $\mathrm{p}(u\mid t_i)$ -- notice that we use $\probab $ for the probability function that we have just defined and $\mathrm{p}$ for the traditional notion of probability defined by probability theory. If we follow the usual definition of conditional probability, we have that  $\mathrm{p}(u\mid t_i)=\frac{\mathrm{p}(u \wedge t_i)}{\mathrm{p}(t_i)}$, that is, the probability of obtaining $u$ given that we obtained $t_i$ equals the probability of obtaining $u$ together with $t_i$ divided by the probability of obtaining $t_i$. Now, the probability of obtaining $u$ together with $t_i$ is simply the probability of obtaining $\lan t_i , u\ran$ as output of our process, and the probability of obtaining $t_i$ is the probability of obtaining any output of the form $\lan t_i, \ldots \ran $. Hence, the following proposition  proves that our definition of conditional probability $\probab (u\mid t_i , \calt_{(s_1, \ldots , s_n \mid \{p_1t_1, \ldots , p_nt_n\})})$ exactly corresponds to the usual definition of  conditional probability $\mathrm{p}(u\mid t_i)$.  

\begin{proposition}
For any conditional construct \[c=(s_1, \ldots , s_n \mid \{p_1t_1, \ldots , p_nt_n\})\]the  probability $\probab (u\mid t_i, \calt_c)$ that $c$ yields  any pair of the form $\lan \ldots  , u\ran $ conditioned on the fact that $t_i$ is the first element of the pair equals the probability that $c$  yields exactly $\lan t_i , u\ran$ divided by the probability that $c$  yields any term of the form $\lan t_i , \ldots \ran$.
\end{proposition}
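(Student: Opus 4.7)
The plan is to directly analyse the reduction tree $\calt_c$ of the conditional construct $c$ and read off the two probabilities from it, then compare their ratio to the definition of $\prob(u\mid t_i, \calt_c)$ given in Definition \ref{def:conditional-probability}.

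First, I would describe the top of $\calt_c$ explicitly. Under the call-by-name strategy, the sequence of $\beta$-reductions of the head $\lambda$-abstractions applied to $s_1, \ldots, s_n$ is deterministic (each step has label $1$), and yields the probabilistic term $\{p_1 \lan t_1, s_1 \ran, \ldots, p_n \lan t_n, s_n \ran\}$ after $n$ steps. From this node, the tree branches into $n$ children, one for each $\lan t_i, s_i \ran$, with the edge labelled $p_i$. Since $t_i$ is already a value in the sense that the interesting probabilistic behaviour lies entirely in $s_i$ (the leading coordinate $t_i$ never changes along the subtree), the subtree of $\calt_c$ rooted at $\lan t_i, s_i\ran$ is in an obvious bijection with $\calt_{s_i}$: each leaf of $\calt_{s_i}$ containing a value $v$ corresponds to exactly one leaf $\lan t_i, v\ran$ of the subtree, and the product of the edge labels along the path is the same.

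Next, I would compute the two probabilities. By the bijection above, the computation probability of any leaf $\lan t_i, v\ran$ in $\calt_c$ equals $p_i \cdot \prob_{\mathrm{C}}(v, \calt_{s_i})$. Summing over all leaves in which $v = u$ gives
\[
\prob(\lan t_i, u\ran, \calt_c) \;=\; p_i \cdot \prob(u, \calt_{s_i}).
\]
Similarly, summing over all leaves of the form $\lan t_i, v\ran$ for arbitrary $v$, and using the fact that the total probability of the leaves of $\calt_{s_i}$ is $1$ (which follows because at every node the outgoing edge labels sum to $1$, as noted right after Definition \ref{def:reduction-tree}), gives
\[
\prob(\lan t_i, \ldots\ran, \calt_c) \;=\; p_i \cdot \sum_v \prob(v, \calt_{s_i}) \;=\; p_i.
\]

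Finally, dividing, the factor $p_i$ cancels and I obtain
\[
\frac{\prob(\lan t_i, u\ran, \calt_c)}{\prob(\lan t_i, \ldots\ran, \calt_c)} \;=\; \prob(u, \calt_{s_i}),
\]
which is exactly $\prob(u\mid t_i, \calt_c)$ by Definition \ref{def:conditional-probability}. The only step that requires a little care is the bijection between the subtree of $\calt_c$ rooted at $\lan t_i, s_i\ran$ and $\calt_{s_i}$: one must check that the CBN reduction rule for tuples reduces exactly the leftmost non-value, and since $t_i$ is a value (it is the term selected by the probabilistic choice, and in the intended use cases of the construct $t_i$ is a constant/value appearing under $\{\ldots\}$), all subsequent reductions take place inside the second component and thus mirror those of $\calt_{s_i}$ one-to-one, with matching probability labels. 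This is the only conceptual obstacle; the rest is arithmetic.
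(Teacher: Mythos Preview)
Your proof is correct and follows essentially the same approach as the paper: both arguments identify the node $\lan t_i, s_i\ran$ reached after the deterministic $\beta$-reductions and the probabilistic branch with label $p_i$, then observe that the subtree of $\calt_c$ below this node has exactly the same labels as $\calt_{s_i}$, so that the ratio in question collapses to $\prob(u,\calt_{s_i})$. Your explicit computation of the two probabilities via the subtree bijection is slightly more detailed than the paper's ``upper part/lower part'' phrasing of the same factorisation, and you are right to flag that the bijection relies on $t_i$ being a value---an assumption the paper uses tacitly as well.
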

\begin{proof}
By definition, $\probab (u\mid t_i, \calt_c) =\probab (u , \calt_{s_i})$. The probability of obtaining $\lan t_i , u\ran$ as output of $c$, on the other hand, is $\probab (\lan t_i , u\ran, \calt_c)$. This value is computed by taking the product of the labels on the path connecting the root of $\calt_c$ to its leaf of the form $\lan t_i , u\ran$.
Let us define the {\it upper part} of this path to be its shortest subpath connecting the root of $\calt_c$ to a node of the form $\lan t_i , \ldots \ran$ and let us call this particular node $\lan t_i , \ldots \ran$ the {\it mid-node} of the path. The {\it lower part} of the path is then defined as the subpath that connects its mid-node to the leaf $\lan t_i , u \ran$. Notice that the upper and the lower part of the path meet at the mid-node and their union is the whole path.

Now, the product of the labels on the upper part of our path is exactly the probability that $c$ yields a term of the form $\lan t_i , \ldots \ran$. Indeed, this is the product of the labels on the path connecting the root of $\calt_c$ to the mid-node, and a leaf of $\calt_c$  is of the form $\lan t_i , \ldots \ran$ if, and only if, it occurs in the subtree rooted at the mid-node.

For obvious arithmetical reasons, if we divide $\probab (\lan t_i , u\ran, \calt_c)$ by the product of the labels occurring on the upper part of our path, we obtain the product of the labels occurring on the lower part of our path. But the product of the labels occurring on the lower part of our path is exactly $\probab (\lan t_i ,  u\ran , \calt_{s_i})$. Indeed, the value of $ \probab (\lan t_i ,  u\ran , \calt_{s_i}) $ is computed by taking the product of the labels occurring on the path connecting the root of $\calt_{s_i}$ to the leaf $\lan t_i ,  u\ran$ and, by the definition of conditional constructs and of reductions trees, the labels occurring in $\calt_{s_i}$ are identical to the labels occurring in the subtree of $\calt_c$ that is rooted at the mid-node.
\end{proof}

From the previous proof, it is easy to see that the calculus also enables us to encode conjunctions of probabilistic events -- intended as outputs of probabilistic  terms -- and to compute their probability exactly as one would expect. Indeed, for any two  probabilistic terms $t:T$ and $s:S$, we can construct the term $\lan t,s\ran: T\oplus S $. Then, if $t$ yields $u$ as output with probability $p$ and $s$ yields $v$ as output with probability $q$, then $\lan t,s\ran$ will yield  $\lan u,v\ran $ as output with probability $p\cdot q$. Accordingly, we have that if $\probab(u, \calt_{t})=p$ and $\probab(v, \calt_{s})=q$, then, by the definition of reduction trees, $\probab(\lan u, v\ran , \calt_{\lan t, s\ran })=p\cdot q$. 

As for the probability of the disjunction of events, since the interesting case is the one in which we consider more outputs of the same process, we must do something slightly different and exploit the operators $\trust $ and $\test$ in a creative way. Suppose that we have a process $t$ with potential outputs $u_1:U_1, \ldots , u_n:U_n$ and we want to compute the probability of obtaining either $u_i:U_i$ or $u_j:U_j$ for $1\leq i,j\leq n$ and  $U_i\neq U_j$. In this case, we can construct the term $\trust \, (\test_1 t) \, (1\; U_i\oplus U_j)\,0 $ and then compute $\probab(\tert, \calt_{\trust \, (\test_1 t)\, (1\; U_i\oplus U_j)\, 0})$. Indeed, this value will be the probability of obtaining in one execution of $t$ either one output of type $U_i$ or one of type $U_j$. Obviously, this probability will precisely be the sum of the probability of obtaining $u_i$ and of the probability of obtaining $u_j$, as one would expect.

As final remark, let us specify that the notion of conditional  construct specified in Definition \ref{def:conditional-construct}
% and that of conditional probability $\probab(\;\mid \;)$ specified in Definition \ref{def:conditional-probability} 
also covers the case in which our conditional process is a term $(s_1, \ldots , s_n  \mid \{p_1t_1, \ldots , p_nt_n\})$ where some of the choices $t_1, \ldots , t_n$ -- the list of terms potentially obtained during the first phase of the computation -- are not only used to select how to continue the computation but are also used as arguments of the processes $s_1, \ldots , s_n$ implementing the second phase of the computation. It is indeed enough to consider the conditional construct\[((\lam x_1.r_1)t_1, \ldots , (\lam x_n.r_n)t_n \mid \{p_1t_1, \ldots , p_nt_n\})\]
%and then compute the conditional probability  $\probab(\;\mid \;)$ as usual. 
The existence of this particular case of our notion of conditional construct is worth mentioning since a natural application of conditional probability relies on it. Suppose indeed that we wish to represent the probability that a process $t$ yields a particular output $v$ in case a subprocess $s$ of $t$ yields a particular output $s'$. More precisely, we would like to have a formal means to represent the conditional probability statement that expresses the probability that the term $t[s/x]$ reduces to the value $v$ under the assumption that its subterm $s$ reduces to the value $s'$.\footnote{The substitution $t[s/x]$ is simply used here to specify that the term $s$ occurs inside the term $t$.} In order to do so, we can define the conditional construct \[((\lam x . t)s_1 , \ldots ,(\lam x . t)s_n  \mid \{p_1s_1 , \ldots , p_ns_n\})\]where $s_1 , \ldots , s_n$ are all values to which $s$ might reduce and $p_1 , \ldots , p_n$ are their respective probabilities of being the actual value to which $s$ reduces, and then define the conditional probability statement \[\probab(v \mid s' , \calt_{((\lam x . t)s_1 , \ldots ,(\lam x . t)s_n  \mid \{p_1s_1 , \ldots , p_ns_n\})})\]in which, for some $i$ such that  $1\leq i\leq n$, $s_i=s'$. If we do so, we have that, by definition\[\probab(v \mid s' , \calt_{((\lam x . t)s_1 , \ldots ,(\lam x . t)s_n  \mid \{p_1s_1 , \ldots , p_ns_n\})})=\probab (v , \calt_{(\lam x . t)s'})\]and $\probab (v , \calt_{(\lam x . t)s'})$ is obviously equal to  the probability that $t[s'/x]$ reduces to $v$, which is what we wished to represent.\footnote{This is exactly the notion of conditional probability formalised by TPTND judgements of the form $\ldots , x_t:\alpha_a , \ldots \vdash t_n : \beta_{\tilde{b}} $ or of the form $\ldots \vdash [x_t]t_n : (\alpha \impl \beta)_{[a]\tilde{b}} $. Judgements of this kind, indeed, express the fact that $n$ executions of the process $t$ show that $t$ yields $\beta$ as output with probability $b$ given that the random variable $x_t$, on which $t$ depends, takes  value $\alpha$ with probability $a$.}

% want to study the  behaviour of a probabilistic process $t=\{p_1t_1, \ldots , p_nt_n\}$ given the assumption that some part of it will behave in a particular way rather than non-deterministically. Suppose, indeed, that we want to compute the probability that $t$ yields an output $v$ given that the subprocess $t_i$ yielded an output $t_i'$. It is then enough to define the conditional construct \[c=(t_1, \ldots , t_i', \ldots , t_n \mid \{p_1t_1, \ldots , p_nt_n\})\] and then compute the probability \[\probab ( \lan t_1 , v  \ran , \calt _c )+\ldots +\probab ( \lan t_n , v  \ran , \calt _c )\] of $c$ yielding as output any pair with $v$ as second member.

\section{Related work}\label{sec:related}

Much work has been and is presently being devoted, in the context of the literature on logic and computation, to formal analyses of probabilistic processes and phenomena through deductive and computational systems. Three main approaches can be distinguished: a purely deductive one, a purely computational one, and a mixed one concerned with the definition of typed calculi based on correspondences between deductive systems and computational ones. The deductive approaches mainly aim at relativising consequence relations by using probabilities. They thus formalise relations that hold when something follows from certain hypotheses with a certain probability. The purely computational approaches aim, on the other hand, at relativising the notion of computation itself by probabilities, and they formalise relations that hold when something is the output of a certain process with a certain probability. The third approach oscillates between the first two and either only adds a deductive component for assigning types to the terms of a probabilistic computational system or does this and, in addition, presents a deductive component that can be interpreted as a probabilistic relativisation of a consequence relation.    

Even though in Sections \ref{sec:potential-analysis} and \ref{sec:conditional-probability} we also introduce methods for reasoning about trust-related properties of terms and for conducting logical operations on terms, the calculus \lamt{} in itself is a system of the third kind where the deductive component is simply aimed at providing a useful typing system for the computational terms of the calculus. The definition of introduction and elimination rules for a trustworthiness operator, defined on the basis of the operators $\test$ and $\trust$, seems possible and of sure interest. Extending the calculus in this direction would yield a logical analysis of trust in addition to the computational one presently provided. Considering that the presentation of \lamt{} alone already constitutes a considerable amont of material, though, we prefer to leave such an extension for future work.

We present a brief overview of the existing systems 
for the formalisation of probabilistic processes and phenomena through deductive and computational systems in order to discuss the relations and differences between these and \lamt{}.

Let us begin with those systems that admit a clear deductive reading of the analysed phenomena. A system of this kind is presented and studied in \cite{bor19}. The work  introduces a sequent calculus in which each sequent -- which can be seen as a derivability statement -- is labelled by a probability interval of its validity. The properties of the deductive rules for working with probabilistic statements of this kind are then studied, while the work does not offer a computational interpretation for the derivations that can be constructed by the introduced deductive system.

A system which, on the other hand, implements a deductive approach to the analysis of probabilistic computational processes is TPTND \cite{pda21,dgp25}. While TPTND and  \lamt{} share the same goal -- that is, the formalisation of a notion of trustworthiness  of probabilistic processes -- there is a significant difference between the two concerning the level of the conducted analysis.  TPTND is a high-level, descriptive deductive system in which probabilistic processes are studied by assuming a -- possibly incomplete -- probabilistic description of their behaviour with respect to their potential outputs. In \lamt{}, on the other hand, a probabilistic process can be directly internalised in the calculus as a $\lam$-term that computationally implements the behaviour of the process. The probabilistic description of this behaviour is then extracted at runtime in order to conduct the analysis of the process in terms of trustworthiness. We, therefore,  have in  \lamt{} a lower-level analysis of the processes that does not assume a pre-given probabilistic description. Another difference between TPTND and \lamt{} is that even though the former is mainly meant to provide a way to analyse probabilistic processes from a computational perspective, it also supports, unlike \lamt{}, a deductive declarative reading.  The probabilistic component of TPTND is used to specify with what probability a certain term (intended as a computational process) has a certain type (intended as one of its possible outputs): the notation $\Gamma\vdash t:\alpha_{p}$ is most naturally interpreted as stating that the process $t$ has probability $p$ to yield $\alpha$ as output under the probability distribution encoded in $\Gamma$. But the fact that a term has a certain type with a certain probability can also be naturally interpreted as the fact that the term $t$ witnesses that the occurrence of the event $\alpha $ follows from the assumptions $\Gamma$ with a certain probability.\footnote{The naturalness of this interpretation derives from the fact that in certain approaches to type theory, see \cite{ml84}, and in proofs-as-programs correspondences, see for instance \cite{glt89},  $\Gamma\vdash t:\alpha$ is also interpreted as stating that $t$ witnesses that $\alpha $ follows from the formulae occurring in $\Gamma$.} Statements of the form $\Gamma\vdash t:\alpha_{p}$, therefore, also have a rather natural declarative interpretation. Such an interpretation is not presently supported by \lamt{} since probabilities are never explicitly used to relativise the relation between terms and types. On the other hand, \lamt{} enables us to explicitly represent the structure of probabilistic programs, which is left implicit in TPTND. This last difference between the two systems also explains why they induce two rather different formal notions of computation. Indeed, normalisation in TPTND does not concern in any way the {\it reduction} of terms -- intended as simplification of terms' structure -- but it rather formalises the  process of conducting probabilistic experiments and of combining their outcomes through the process of applying rules to previously obtained derivations. This is a rather non-standard notion of normalisation, but it is justified by the  fact that  experiments on the probabilistic behaviour of processes are executed in TPTND precisely by applying deduction rules. Normalisation in \lamt{}, on the other hand,  consists in the reduction of terms to redex-free normal forms or to values, as usual. This also means that, unlike TPTND, \lamt{} provides us with an explicit procedural analysis of probabilistic processes.

A considerable literature concerns $\lam$-calculus-based systems with no deductive reading but providing instead a purely computational analysis of probabilistic processes. In \cite{dp95}, for instance, a general discussion of non-deterministic extensions of $\lam$-calculus can be found. No specific reference to probability is made, but the studied issues are quite typical of probabilistic $\lam$-calculi and a very clear distinction between parallelism and non-determinism -- the latter being a characterising feature of probabilistic computation -- is traced.
Several specific programming  languages based on $\lam$-calculus but also featuring probabilistic features have been introduced and studied in the literature, see for instance  \cite{rp02}, \cite{dlz12}, \cite{sco14}, \cite{dk20}, and \cite{akm21}. We remark that a semantical approach has been usually favoured. A special mention to the work in \cite{dhw05} is due, though, since one of the main ideas behind the term-based semantics developed in \cite{dhw05} is at the basis of the kind of analysis developed in Section \ref{sec:potential-analysis} -- the notion of {\it probabilistic transition relation derivation} is indeed comparable with the notion of reduction tree introduced by  Definition \ref{def:reduction-tree}.

Finally, several systems combining a computational analysis and a deductive one have been introduced. Since their number is more limited and since these calculi are closer to \lamt{} in nature, we discuss them in more detail. A particularly expressive probabilistic calculus has been presented in \cite{par03}. This calculus is based on sampling functions and elementary probabilistic instructions in this calculus are limited to the generation of random real numbers: $\gamma . t $, with $t$ term and $\gamma$ sampling variable, reduces to $t[u/\gamma ]$\footnote{Notice that the expression $t[u/\gamma ]$ uses our notation for substitutions. The notation employed in  \cite{par03} is different.} where $u$ is a randomly generated number. Nevertheless, it is clear that this kind of instruction is enough to simulate probabilistic choices among generic terms. The calculus also features a fixpoint operator that enables the formulation of recursive definitions. Hence, in this calculus it is also possible to specify probability distributions over infinite data structures. Another central peculiarity of this system is that it features a syntactic distinction between regular functions and functions the application of which can only be reduced when applied to values. Indeed, in the calculus, applications of regular lambda abstractions $\lam x:\tau .t $ are reduced according to a call-by-value discipline, while applications of integral abstractions $\int x:\tau . t $ are reduced according to a call-by-name discipline. Thus, randomly generated numbers can be fed to terms according to a call-by-name discipline while other arguments can still be used according to a call-by-value discipline. 

A second rather expressive probabilistic language obtained by extending $\lam$-calculus has been introduced in \cite{ppt05}. The basic probabilistic instruction in the calculus is the sampling of a distribution and, just like the calculus presented in \cite{par03}, it is based on sampling functions. The main difference between the calculus introduced in \cite{par03} and the one introduced in \cite{ppt05} is that the syntax of the latter -- which defines a metalanguage in which two languages can be distinguished: one for terms and one for expressions -- features a direct way to denote probability distributions. This makes it possible to have a basic probabilistic operator that takes as explicit arguments one probability distribution $P$, one term $t$ and one variable $x$ occurring in $t$, and outputs an instantiation of the term $t$ in which a value sampled from the probability distribution replaces all occurrences of $x$ in $t$. This enables a selective way of using call-by-value and call-by-name reduction strategies similar to the one adopted in \cite{par03}. Also this calculus features a fixpoint operator which enables the encoding of infinite data structures and the direct formulation of recursive definitions. 
Both calculi presented in \cite{ppt05} and \cite{par03} distinguish themselves from similar calculi in the literature insofar as their expressive power allows to represent continuous distributions. The kind of probabilistic choice featured in \lamt{} -- that is, the one formalised by terms of the form $\{p_1t_1, \ldots p_nt_n\}$ -- does not enable us to represent distributions of this kind.  

%The calculus presented in \cite{war16} is not yet published or reviewed.

The calculus  $\Lambda _{\mathsf{HP}}$ introduced \cite{te19} is a probabilistic $\lam$-calculus equipped with a linear logic type system. Apart from the obvious difference that the types of $\Lambda _{\mathsf{HP}}$-terms are linear logic formulae while the types of \lamt{}-terms are not, and the fact that only binary non-deterministic choices are primitive in $\Lambda _{\mathsf{HP}}$ -- which is nonetheless not a significant difference, since multiple choices can be simulated by binary ones -- a substantial difference between $\Lambda _{\mathsf{HP}}$ and \lamt{} consists in the fact that the main strategy adopted in $\Lambda _{\mathsf{HP}}$ for the evaluation of probabilistic arguments of functions is call-by-value, as opposed to \lamt{} which employs a call-by-name strategy. The authors of this work lean towards the adoption of a call-by-value strategy for the evaluation of function arguments in order to be able to fix the value of probabilistic terms before they are duplicated and used by other functions; in the present work, we do the opposite in order to be able to conduct experiments testing different outputs of the same probabilistic program, and hence exactly in order {\it not to fix} the value of probabilistic terms before they are duplicated to be used inside the scope of operators such as $\test $ and $\trust $. 

As far as expressive power is concerned, all calculi presented in \cite{ppt05}, \cite{par03} and \cite{te19} are more expressive than the $\test$-$\trust$-free fragment of \lamt{}. Indeed, recursive datatypes are supported in $\Lambda _{\mathsf{HP}}$ and non-terminating terms can be defined. Similarly, by the calculi presented in \cite{ppt05} and \cite{par03}, it is possible to define infinite data structures and directly encode recursive definitions, which is not possible in \lamt{}. On the other hand, \lamt{} enjoys a termination result, both for the full reduction of terms to normal form without redexes by a logical reduction strategy and for the reduction of terms to values by a call-by-name strategy. Moreover, the calculus \lamt{}, unlike those introduced in \cite{ppt05}, \cite{par03} and \cite{te19}, features the  operators $\test$ and $\trust $ for conducting experiments and trustworthiness checks on probabilistic programs. No operator that can be directly used to do so appears in \cite{par03} or \cite{te19}. As for \cite{ppt05}, on the other hand, it seems possible to use the $\mathsf{expectation}$ term construct of \cite{ppt05} for tasks which are similar to those that can be achieved by $\test$ and $\trust$ in \lamt{}. The main difference seems to be that while the $\mathsf{expectation}$ operator of \cite{ppt05} takes as argument a term explicitly denoting the considered probability distribution, $\test$ and $\trust$ in \lamt{} do not require in any way knowledge about the probability distribution that defines the behaviour of the considered probabilistic program. It is quite clear, indeed, that $\test$ and $\trust$ can be very well also used to analyse the probabilistic behaviour of blackbox programs of unknown structure by exclusively considering their outputs.

A typed probabilistic $\lam$-calculus which is, on the other hand, closer in expressive power to the $\test$-$\trust$-free fragment of \lamt{} is the one introduced in \cite{dip20}. The latter constitutes indeed one of the main inspirations of the  $\test$-$\trust$-free fragment of our calculus. The typing policy adopted for probabilistic choices and the way function applications are handled during reduction in \cite{dip20}, nevertheless, essentially differ from those employed in the present work. The main difference in typing concerns the definition of subtypes. Di Pierro defines $\cho _{i=1}^{m} A_j$ to be a subtype of $\cho _{i=1}^{n} B_i$ if, and only if, for each $B_i$, there is an $A_j$ which is a subtype of $B_i$. For us, on the other hand, $\cho _{i=1}^{m} A_j$ is a subtype of $\cho _{i=1}^{n} B_i$ if, and only if, each $A_j$ is a subtype of some $B_i$. Notice the inversion of the quantified elements in the two definitions. The intuition behind our definition is directly tied to the idea that a subtype of a type is a specification of the type, in the sense that the subtype should be usable in all cases in which the type is usable. And indeed we can use $\cho _{i=1}^{m} A_j$ whenever we can use $\cho _{i=1}^{n} B_i$, since each element of $\cho _{i=1}^{m} A_j$ is a specification of some element of $\cho _{i=1}^{n} B_i$. The subtyping discipline adopted by Di Pierro, on the other hand, complies with the intuition behind the notion of subtypes insofar as a particular reduction policy is adopted for applied functions of type $\cho _{i=1}^{n} C_i\impl D_i$ -- and this is the second relevant difference between the present work and Di Pierro's. When a function term of type $\cho _{i=1}^{n} C_i\impl D_i$ is   applied to an argument, the reduction policy employed in \cite{dip20} admits the possibility of eliminating certain choices from the term of type $\cho _{i=1}^{n} C_i\impl D_i$ -- thus reducing the type itself as well -- if these choices are not compatible with the argument to which the term is applied.

\section{Conclusion}\label{sec:conclusion}

We have introduced \lamt{}, a typed probabilistic $\lam$-calculus featuring an experiment operator $\test $ and a trustworthiness operator $\trust$ that allows to conduct experiments on the probabilistic behaviour of programs and to use the results of these experiments at runtime in order to conclude whether the tested programs are trustworthy with respect to a given probability distribution of reference modulo an admissible error threshold.  We also defined a notion of confidence based on this runtime notion of trust that allows us to abstract from particular computations and to reason about the global computational behaviour of probabilistic terms. This notion of confidence induces a partial order among terms and, by an application of the strong law of large numbers, we have proved that it behaves as expected with respect to terms that are suitably defined with respect to the target probability distribution.

We have left two main investigation points for future work. The first is the formal specification of a way to integrate in the system blackbox probabilistic processes in order to make explicit the fact that the runtime behaviour of $\test$ and $\trust $ does not rely on any kind of information concerning the definition of probabilistic programs. As already mentioned, there seem to be no real obstacle to the development of such an extension of \lamt{}. The second issue that we have left for future work concerns the definition of a deductive extension of \lamt{} constituting both a computational system  and a  declarative logical system for trust. Such a system could be defined by adding to \lamt{} a group of well-balanced deduction rules for an operator that declares the trustworthiness of a program rather than checking it. These rules could be defined in accordance to a proofs-as-programs correspondence in order to provide a sentential rendition of the phenomena which are computationally analysed by \lamt{}.

\section*{Acknowledgment}
  \noindent We thank %the anonymous reviewers for the very useful comments that greatly helped in improving the present work.

\bibliographystyle{apalike}
\bibliography{bib-lambda-trust.bib}

\end{document}